%
%
%
%
%
%
\documentclass{article}
\usepackage{supertech}
\usepackage{graphicx}
\usepackage{amsmath}
\usepackage{amssymb}
\usepackage{hyperref}
\usepackage{breakurl}
\usepackage{psfrag}       
\usepackage{subfigure}
\usepackage[ruled]{algorithm2e}
\usepackage{color}

%
\usepackage{xspace}
\newcommand{\parhead}[1]{\noindent{\textbf{#1.}\xspace}}
\newcommand{\BS}{station\xspace}
\newcommand{\BSs}{stations\xspace}
\newcommand{\SA}{Station Assignment\xspace}
\newcommand{\name}{performance\xspace}
\usepackage[normalem]{ulem}
\newcommand{\mig}[1]{\textcolor{blue}{#1}}
\newcommand{\pru}[1]{\textcolor{red}{#1}}
\renewcommand{\mig}[1]{#1}
\renewcommand{\pru}[1]{#1}
\newcommand{\toskip}[1]{}


%
%
\begin{document}

\title{Station Assignment with Reallocation\thanks{A preliminary version of this work appeared in SEA 2015~\cite{MosteiroRW15}.}}





\author{	
		Austin Halper \thanks{
		Pace University,
		Computer Science Dept.,
		New York, NY, USA,
		\texttt{ah17939n@pace.edu}} 
	\and
		Miguel A. Mosteiro \thanks{
		Pace University,
		Computer Science Dept.,
		New York, NY, USA,
		\texttt{mmosteiro@pace.edu}} 
	\and
		Yulia Rossikova \thanks{
		Kean University,
		Computer Science Dept.,
		Union, NJ, USA, 
		\texttt{rossikoy@kean.edu}} 
	\and
		Prudence W. H. Wong \thanks{
		University of Liverpool,
		Dept. of Computer Science,
		Liverpool, UK, 
		\texttt{pwong@liverpool.ac.uk}}
}




\date{}

\maketitle

\begin{abstract}
We study a dynamic allocation problem that arises in various scenarios where 
mobile clients joining and leaving the system have to communicate with static stations via radio transmissions. Restrictions are a maximum delay, or laxity, between consecutive client transmissions and a maximum bandwidth that a station can share among its clients. We study the problem of assigning clients to stations so that every client transmits to some station, satisfying those restrictions. We consider reallocation algorithms, where clients are revealed at its arrival time, the departure time is unknown until they leave, and clients may be reallocated to another station, but at a cost proportional to the reciprocal of the client's laxity. We present negative results for previous related protocols that motivate the study; we introduce new protocols that expound trade-offs between station usage and reallocation cost; 
we determine experimentally a classification of the clients attempting to balance those opposite goals;
we prove theoretically bounds on our performance metrics; and we show through simulations that, for realistic scenarios, our protocols behave much better than our theoretical guarantees.
\end{abstract}

%


\section{Introduction}

We study a dynamic allocation problem that arises in various scenarios 
where data on mobile devices 
has to be gathered and uploaded periodically to one of the many
static access points available~\footnote{We consider an upstream model, but the same results apply to downstream communication.}.
Examples include \emph{wearable health-monitoring systems}, where ambulatory patients
carry physiological sensors and the data gathered must be periodically uploaded,
and \emph{participatory sensing}~\pru{\cite{KhanXAA13,RestucciaDP16}}, where communities of mobile device users upload
periodically information about their environment.
\pru{For example, in the SPA system~\cite{ShaZS+08}, sensors are attached to
participants periodically sampling the heart rate, blood pressure, movement etc.;
while in the MobGeoSen application~\cite{KanjoBPCFWCW08}, mobile phones update periodically
their geo-location and associated environment.
Depending on individuals the frequency different participants need to communicate may differ,
e.g., depending on the health conditions.}

Mobile devices, called \emph{clients}, join and leave the system continuously,
and they communicate with the static access points, called \emph{stations},
via radio transmissions.
The ephemeral nature
of the clients is modeled by characterizing each client with a \emph{life interval}
(from its arrival time to departure time), 
during which the client has to communicate with some station periodically.
The need of periodic communication is modeled by the client's \emph{laxity},
which bounds the maximum duration a client is not transmitting to some stations.
The intrinsically shared nature of the access to stations is modeled by a maximum
\emph{station bandwidth} shared among its connected clients,
by a \emph{client bandwidth} required for each transmission,
and by the client laxity governing how often it must connect to some stations.

Based on the above model, we 
study the problem of assigning clients to stations so that
every client transmits to some stations satisfying the laxity and bandwidth constraints.
We consider settings where clients are revealed at its arrival time and their departure time is only revealed when they depart (as in online algorithms).
Clients may be reassigned from one station to another and
we call such reassignment \emph{reallocation}.
As to be further elaborated in the next paragraph, 
reallocation has been considered in a similar context in 
the Windows Scheduling problem~\cite{Farach-ColtonLMT14},
where the cost of reallocation is proportional to the number of clients reallocated.
While counting the number of clients reallocated ensures that we do not reallocate too much, 
this may not be a fair cost and it is typical in scheduling to 
consider reallocation (or migration) in terms of the sizes of the jobs instead of the number, e.g.,~\cite{SandersSS09}.
Intuitively reallocation causes more disturbance to a client with small laxity.
Therefore, we assume reallocation incurs a cost inversely proportional to a client's laxity~\footnote{As a first step we consider a reallocation cost in terms of laxity. 
It is of interest to consider bandwidth in the cost and we leave this future work.}.
\pru{Reallocation usually involves handover from one station to another incurring a cost that is time related and also signal related~\cite{CominardiGBO17}}.

We aim to reduce the number of active stations \pru{(a station is active if it has at least one client
allocated to it to transmit)} and reduce the reallocation cost.
However, these two goals are orthogonal,
e.g., we can reallocate the clients every time a client arrives/departs so that the number
of active stations is minimized while incurring a very high reallocation cost;
alternatively we can keep the reallocation cost to zero but we may use many active
stations after a sequence of client departures.
In this paper, we quantify the trade-off between both performance metrics: number of active stations and reallocation cost.
We call this problem \emph{Station Assignment Problem with Reallocation} (SA). 

\parhead{Previous work}
To the best of our knowledge, the closest work to the present paper is~\cite{Farach-ColtonLMT14}, where reallocation algorithms were presented for Windows Scheduling (WS).
The WS problem~\cite{bar2003windows,chan2005temporary,bar2007windows,Farach-ColtonLMT14} is a particular case of SA where the bandwidth requirement of each client is the same and each channel (a.k.a. station in our case) can only serve one client at a time.
WS has applications to various areas such as communication networks, supply chain, job scheduling, media on demand systems, etc. 
In~\cite{Farach-ColtonLMT14}, a unit cost is incurred for each client reallocated
and the objective is to minimize an aggregate sum reflecting the amortized reallocation cost and the number of channels used.
A protocol called Classified Reallocation is showed to guarantee an amortized constant number of reallocations.
This protocol is also evaluated experimentally together with two other protocols Preemptive Reallocation and Lazy Reallocation.

WS~\cite{bar2003windows,chan2005temporary,bar2007windows} was first studied without reallocation and the objective was mainly to minimize the number of channels.
As pointed out in~\cite{JacobsL14}, the WS problem can be shown to be NP-hard
by assembling results available in literature~\cite{NPh,bar2007windows,HolteMR+89}.
For the static case~\cite{bar2003windows,bar2007windows} where a client never departs, we can have online algorithm whose number of channels is only an additive of $O(\sqrt{H})$ from the optimal $H$, where $H$ is the sum of reciprocal laxities of all clients~\cite{bar2007windows}.
For the dynamic case~\cite{chan2005temporary} where a client may depart, the maximum number of channels used over time by the online algorithm is at most a constant times that of the optimal~\cite{chan2005temporary}.
This means that the comparison is against peak load which may occur at different time in the online algorithm and the optimal offline algorithm.
In~\cite{Farach-ColtonLMT14} and this work, we compare against current load.

As noted in~\cite{bar2007windows}, WS is closely related to the classical bin packing problem~\cite{CoffmanGMV1998,CoffmanGS1996,coffman2013bin}.
In addition to this, introducing bandwidth in our model gives another perspective in relation to bin packing.
If all clients have very large laxity (such that the laxity constraint does not restrict them from being assigned to the same station)
and the only concern becomes the bandwidth, then the problem of minimizing the number of stations becomes the same as minimizing 
the number of bins.
Therefore, lower bounds on the approximation ratios of bin packing, i.e., 1.54037 for asymptotic approximation ratio~\cite{BaloghBG10}
and 1.5 for absolute approximation ratio~\cite{Epstein10}, apply to the station usage ratio of our problem when reallocation is not allowed.

\parhead{SA and other assignment problems}
SA generalizes several problems.
It generalizes the WS problem 
that considered periodic transmission to capture bandwidth sharing. 
Different objectives are considered, in~\cite{bar2003windows,chan2005temporary,bar2007windows}
the goal is to minimize the number of channels used while
in~\cite{Farach-ColtonLMT14} the goal is to minimize a combined cost of the number of reallocated clients and number of channels.
We extend the later cost function such that the number of reallocated clients is weighted inversely by the client laxity.
The problem in~\cite{AntaKMW13} considers clients with the same laxity and characterizes
adversarial arrivals that admit feasible solutions.
This makes the problem substantially different from ours as the periodic transmission can be handled as if the bandwidth is shared equally among the clients.
We generalize the study to allow different laxities, and provide trade-off between
reallocation cost and number of stations.

Our problem differs from existing scheduling problems despite sharing similarities.
SA shares the idea of assigning tasks of different bandwidth 
to stations as the load balancing problem~\cite{Aza97} of assigning jobs of different loads to machines,
yet the load balancing problem does not consider periodic transmission, does not allow reallocation,
and the objective is to minimize the maximum load.
Interval coloring~\cite{AdamyE03,EpsteinEL09} concerns the number of machines used but not periodic tasks.
Periodic tasks have been considered in real time scheduling~\cite{BG04} but the periodic appearance of the tasks is determined by the input, while in our problem the periodic appearance is determined by the algorithm to satisfy the laxity constraint.
The SA problem is also related to online assignment problems such as 
$b$-matching~\cite{pruhsBmatching}, 
fractional matching~\cite{azarFracmatching}, 
and adwords~\cite{muthuAdwords}. 
Among other details, the objective function is different. 

We consider two orthogonal objectives which is common in scheduling context.
E.g., in energy efficient scheduling problems, one would minimize the use of energy to provide acceptable quality of service.
There are two typical approaches of optimization:
to minimize the summation of two costs, e.g., energy efficient flow time scheduling minimizes the sum of energy usage and total flow time of the tasks~\cite{AlF07}; and
to formulate two performance ratios as we do in this work, e.g., energy efficient throughput scheduling derives online algorithm that is $t$-throughput-competitive and $e$-energy-competitive~\cite{ChanCLLMW09}.
Moreover, jointly targeting high bandwidth and low delay is also quite common in practice.
For instance, in~\cite{JietalToN2015}, the authors present a greedy scheduling policy for wireless networks aimed to achieve provably good performance in terms of both, throughput and delay. The model is different from ours (multiple radio channels, which can be viewed as a discrete version of our continuous-bandwidth allocation, but only one base station and only one packet per client), but the two-dimensional optimization is the same.

Our objective function takes into account the assignment cost, which is often the optimization criteria in scheduling and network design problems.
A good example is energy efficient speed-scaling scheduling where the speed of a processor is scalable to a higher speed consuming more energy while more productive.
In~\cite{BansalCP13} the objective function is the energy usage (modeled as an arbitrary power function) plus fractional weighted flow time.
This is generalized in~\cite{PruhsPrimalDual2012} to parallel machines where the objective function is energy plus an arbitrary assignment cost.
Similar cost functions have been considered for the minimum-cost network-design problem, where
packets have to be routed through a network of speed scalable routers, and the goal is to minimize the aggregate cost of assigning a packet to a link and the energy consumption of supporting the current load on the router~\cite{DBLP:conf/focs/AndrewsAZ10}.
On the other hand, scheduling in wireless networks with reallocation of resources has also been considered~\cite{leonardi} yet
reallocation is assumed to incur no cost.

Reallocation has been considered in the context of scheduling~\cite{bender2013reallocation,SandersSS04,AlbersH12}.
In~\cite{bender2013reallocation}, a distinction is made between reassignment within server (reschedule) and between servers (migration).
Here, we assume rescheduling within a station is free and we use ``reallocation'' to refer to reassignment to other stations.
It is often that the number/size of jobs reallocated is bounded, but by different quantities,
e.g., by a function of the number of jobs in the system~\cite{bender2013reallocation}, the size of the arriving job~\cite{SandersSS04} or the number of machines~\cite{AlbersH12}.
In our problem, we bound the reallocation by the weight (cumulative inverse laxity) of the clients departed.


\section{Our Results}

In this paper, we study reallocation algorithms for SA assuming that clients have 
laxity and bandwidth requirements (arbitrary for the analysis, set to specific values for experimental evaluation), that clients depart from the system at arbitrary times, and that they may be reallocated, but at some cost proportional to the resources needed. Specifically, our contributions are the following.
\begin{itemize}
\item We define a characterization of SA reallocation algorithms, which we call $(\alpha,\beta)$-\name, as a combination of the competitive ratio on station usage ($\alpha$) and the cost of reallocations contrasted with the resources released by departures ($\beta$).
\item We show a sequence of negative results proving that worst-case guarantees cannot be provided by previous protocols Classified Reallocation and Preemptive Reallocation~\cite{Farach-ColtonLMT14}, even if they are modified to our reallocation cost function. 
\item We present a novel SA protocol called Classified Preemptive Reallocation (CPR) where clients are \emph{classified} according to laxity and bandwidth requirements, and upon departures the remaining clients are \emph{preemptively} reallocated to minimize station usage, but only within their class. The protocol presented includes a range of classifications that exposes trade-offs between reallocation cost and station usage. 
In fact, we first found experimentally what is the classification function that seems to balance these goals (i.e. neither of the number of active stations nor the reallocation cost is the largest observed), and then we provided theoretical guarantees for all functions considered.
\item In our main theorem, we prove bounds on both of our performance metrics, and we instantiate those bounds into three classifications and for specific scenarios in two corollaries (refer to Section~\ref{section:analysis} for the specific bounds.)
\item Finally, we present the results of our extensive simulations that allowed us to find the function that maintains both, station usage and reallocation cost, below the maximum observed. Additionally, our simulations show that, for a variety of realistic scenarios,  CPR performs better than expected by the worst-case theoretical analysis, and close to optimal on average.
\end{itemize}


\section{Definitions}

\parhead{Model}
We consider a set $S$ of \BSs and a set $C$ of clients.
Each client must transmit packets to some \BS. 
Time is slotted so that each time slot is long enough to transmit one packet.
A client can be assigned to transmit to only one \BS in any given time slot.
Starting from some initial time slot $1$, we refer to the infinite sequence of time slots $1,2,3,\dots$ as \defn{global time}.
Each client $c \in C$ is characterized by
an \defn{arrival time} $a_c$ and a \defn{departure time} $d_c$, that define a \defn{life interval}  $\tau_c=[a_c,d_c]$ in which $c$ is \defn{active}. That is, client $c$ is active from the beginning of time slot $a_c$ up to the end of time slot $d_c$.
We define $C(t)\subseteq C$ to be the set of clients that are active during time slot $t$.
With respect to resources required, each client $c$ is characterized by 
a \defn{bandwidth} requirement $b_c$, 
and a \defn{laxity} $w_c$, such that $0 < w_c \leq |\tau_c|$. I.e., $c$ must transmit to some \BS in $S$ at least one packet within each 
$w_c$ consecutive time slots in $\tau_c$~\footnote{To maintain station usage low, we will assume that the laxity can be relaxed during reallocation.}.
On the other hand, each \BS $s\in S$ is characterized by a \defn{station bandwidth} or \defn{capacity} $B$, which is the maximum aggregated bandwidth of clients that \emph{may} transmit to $s$ in each time slot.




\parhead{Notation} 
Let the \defn{schedule} of a client $c$ be an infinite sequence $\sigma_c$ of values from the alphabet $\{0\}\cup S$.
Let $\sigma_c(t)$ be the $t^{th}$ value of $\sigma_c$.
A \defn{\BS assignment} is a set $\sigma$ of schedules that models the transmissions from clients to \BSs. 
That is,
for each client $c\in C$ and time slot $t$, 
it is $\sigma_c(t)=s$ if $c$ is scheduled to transmit to \BS $s\in S$ in time slot $t$, and $\sigma_c(t)=0$ if $c$ does not transmit in time slot $t$. If a client $c$ is scheduled to transmit to a station $s$ we say that $c$ is \defn{assigned} to station $s$.
Note that a client is assigned to a station 
from its arrival time or when it is reallocated to this station 
until its departure time or when it is reallocated to another station
(not only at the instant time that it transmits).
We say that a station that has clients assigned is \defn{active}, and \defn{inactive} or \defn{empty} otherwise.


\parhead{Problem} 
The \defn{\SA problem (SA)} is defined as follows.
For a given set of \BSs and set of clients, obtain a \BS assignment such that
(i) each client transmits to some \BS at least once within each period of length its laxity during its life interval,
(ii) in each time slot, no \BS receives from clients whose aggregated bandwidth is more than the station capacity.
%
%
%
%
%
%
%
%
%
%
Notice that, for any finite set of \BSs, there are sets of clients such that the SA problem is not solvable. We assume in this work that $S$ is infinite and what we want to minimize is the number of \emph{active} \BSs.

\parhead{Algorithms} 
We study \defn{reallocation algorithms} for SA. 
That is, the parameters $w_c$ and  $b_c$ needed to assign the client to some station are revealed at time $a_c$, but the departure time $d_c$ is unknown to the algorithm until the client actually leaves the system (as in online algorithms). Then, at the beginning of time slot $t$, an SA reallocation algorithm returns the transmission schedules of all clients that are active in time slot $t$, possibly reassigning some clients from one station to another. (I.e., the schedules of clients that were already active may be changed from one time slot to another.)
We refer to the reassignment of one client as a \defn{reallocation},
whereas all the reassignments that happen at the beginning of the same time slot are called a \defn{reallocation event}.


\parhead{Performance Metric}
Previous work~\cite{Farach-ColtonLMT14} has considered the number of clients reallocated as the reallocation cost.
In the present work, we consider a different scenario where the cost of reallocating a client is proportional to resources requested by that client.
Specifically, we assume a cost for the reallocation of each client $c$ of $\rho/w_c$, where $\rho>0$ is a scaling factor that generalizes this cost to different settings. For our simulations, we set $\rho=1$, since $\rho$ is also a multiplicative factor in our reallocation metric and, hence, does not provide additional information about the performance of our protocols in terms of reallocation.

Then, letting $\mathcal{R}(ALG,t)$ be the cost of the reallocation event incurred by algorithm $ALG$ at time~$t$, and $R(ALG,t)$ be the set of clients being reallocated, the overall cost is the following. 
\begin{align}
\mathcal{R}(ALG,t) &= \rho\sum_{c\in R(ALG,t)}\frac{1}{w_c}.\label{eq:cost}
\end{align}
We will drop the specification of the algorithm whenever clear from the context.

With respect to performance, we aim for algorithms with low reallocation cost and small number of active stations. Unfortunately, these are contradictory goals. Indeed, the reallocation cost could be zero if no client is reallocated (online algorithm), but the number of active stations could be as big as the number of active clients (e.g. initially multiple clients assigned to each station, and then all but one client from each active station depart). 
On the other hand, the number of active stations could possibly be reduced by applying an offline algorithm on each time slot, but the reallocation cost could be large. 
Thus, we characterize algorithms with both metrics as follows.

For any SA algorithm $ALG$, let $S(ALG,t)$ be the number of active stations at time $t$ in the schedule,
let $D(ALG,t)$ be the set of clients departed since the last reallocation up to time $t$.
Denoting $\sum_{c\in C'} 1/w_c$ as the \defn{weight} of the clients in $C'\subseteq C$, let $\mathcal{D}(ALG,t)$ be the weight of the clients departed since the last reallocation up to time $t$, that is, 
$$\mathcal{D}(ALG,t)=\sum_{c\in D(ALG,t)}\frac{1}{w_c}.$$ 
Also, we denote the minimum number of active stations needed at time $t$ as $S(OPT,t)$.
Throughout, we will drop the specification of the algorithm whenever it is clear from the context.
Then, we say that an SA reallocation algorithm $ALG$ achieves an \defn{$(\alpha,\beta)$-\name} if the following holds for any input.
\begin{align*}
\max_{t} \frac{S(ALG,t)}{S(OPT,t)}\leq \alpha\\
\max_{t:\mathcal{R}(ALG,t)>0} \frac{\mathcal{R}(ALG,t)}{\mathcal{D}(ALG,t)} \leq \beta.
\end{align*}


In words, the overhead on the number of stations used by $ALG$ is never more than a multiplicative factor $\alpha$ over the optimal, and the reallocation cost, amortized on the ``space'' left available by departing clients is never more than $\beta$. 
The reallocation cost is only measured at the time when $ALG$ reallocates
some clients, i.e., when $\mathcal{R}(ALG,t)>0$, because 
it is not meaningful to consider times in between reallocation events.
The rationale of comparing $\mathcal{R}(ALG,t)$ against $\mathcal{D}(ALG,t)$ is as follows.
When clients do not depart, the WS problem admits very good approximation performance even without reallocation
(recall in the introduction that in such case there is online algorithm that differs from the optimal offline algorithm
by only an additive term~\cite{bar2007windows}).
Therefore, we are motivated to study how algorithms may benefit from reallocation when there is departure 
by reusing the space released 
by the departure.

\toskip{
We amortize the reallocation cost on the weight of departed clients, 
because such weight quantifies how much the allocation could be improved.
We could also amortize on the weight of arrived clients, 
but such value does not give \pru{an} indication of whether 
an allocation can be improved or not. 
That is, for the same weight of arrived clients, 
some online allocations may be improved significantly but 
others \pru{may be} already optimal. 
Thus, in our metric reallocation cost is taken in absolute terms for arrivals and amortized for departures.
}


Notice that the above ratios are strong guarantees, in the sense that they are the maximum of the ratios instead of the ratio of the maxima. (This distinction was called previously in the literature \emph{against current load} versus \emph{against peak load} respectively.) Moreover, the reallocation ratio computed as the maximum \emph{over reallocation events} is also stronger than the ratio of cumulative weights since the system started.


\section{Algorithms}


\parhead{Broadcast Trees}
A common theme in WS algorithms with \emph{periodic} transmission schedules is to represent those schedules with \emph{Broadcast Trees}~\cite{chan2005temporary,bar2003windows,Farach-ColtonLMT14}. Broadcast trees are a convenient representation because they allow to visualize easily how the laxities are combined. Consider for instance two clients $a$ and $b$, both with laxity $2$. Both clients may be assigned to the same station alternating their transmissions. This assignment is represented by one binary tree where $a$ and $b$ hang from the root of a broadcast tree, modeling such station schedule.
Throughout the paper, we refer to a set of broadcast trees as the \defn{forest}, and to the distance in edges from a node to the root of a broadcast tree as the \defn{depth}.
Generalizing, the $2^d$ nodes at depth $d$ in a complete binary tree represent the time slots $t \mod 2^d$ (see Figure~\ref{fig:treeSchedule}).
Then, to indicate that some (periodic) time slot has been reserved for a client $c$ to transmit to a given station $s$, we say informally that $c$ is assigned to the corresponding node in the broadcast tree of $s$. Throughout the rest of the paper, we use both indistinctively.

Notice that once a client $c$ is assigned to a node $i$, no other client can be assigned as a subtree of $i$, because all the time slots represented by $i$ have been reserved for $c$. (Refer to Figure~\ref{fig:treeOneClient}) However, sibling clients are possible because they represent interleaving reservations (as in the example with $a$ and $b$ in the previous paragraph). Thus, if at any internal node only one child has a client assigned, an empty leaf is placed in the other child, making explicit the availability of the corresponding (periodic) time slot.
Consequently, in broadcast trees all nodes have exactly zero or all possible children. 
Consider for instance the tree shown in Figure~\ref{fig:treeSomeClients}, where black nodes represent clients assigned and white nodes represent available slots. The transmission schedule in this example is depicted in the figure.
Refer to~\cite{chan2005temporary,bar2003windows} for further details on broadcast trees.

\begin{figure}[t]
\begin{center}
\psfrag{tmod}{$t\mod 4\equiv1$}
\psfrag{all}{$1,2,3,\dots$}
\psfrag{allodd}{$1,3,5,\dots$}
\psfrag{alleven}{$2,4,6,\dots$}
\psfrag{s1}{$1,5,9,\dots$}
\psfrag{s2}{$2,6,10,\dots$}
\psfrag{s3}{$3,7,11,\dots$}
\psfrag{s4}{$4,8,12,\dots$}
\psfrag{a}{$c$}\psfrag{b}{$b$}\psfrag{c}{$a$}
\psfrag{1}{$1$}\psfrag{2}{$2$}\psfrag{3}{$3$}
\psfrag{4}{$4$}\psfrag{5}{$5$}\psfrag{6}{$6$}
\psfrag{7}{$7$}\psfrag{8}{$8$}\psfrag{9}{$9$}
\psfrag{10}{$10$}\psfrag{11}{$11$}\psfrag{12}{$12$}
\psfrag{13}{$13$}\psfrag{dots}{$\dots$}
\subfigure[Mapping node - time-slot.]{
\includegraphics[width=0.31\textwidth]{./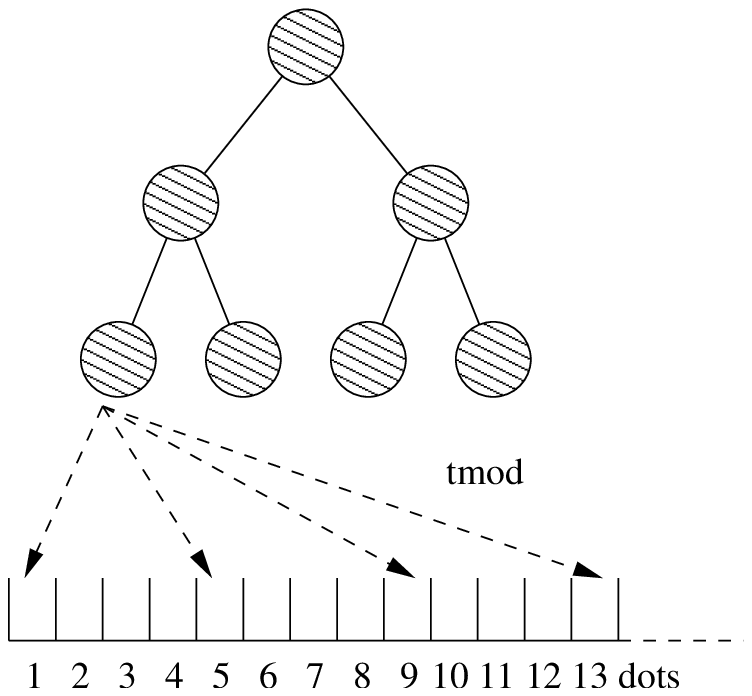}
\label{fig:treeSchedule}
}
\hfill
\subfigure[First client assigned.]{
\includegraphics[width=0.31\textwidth]{./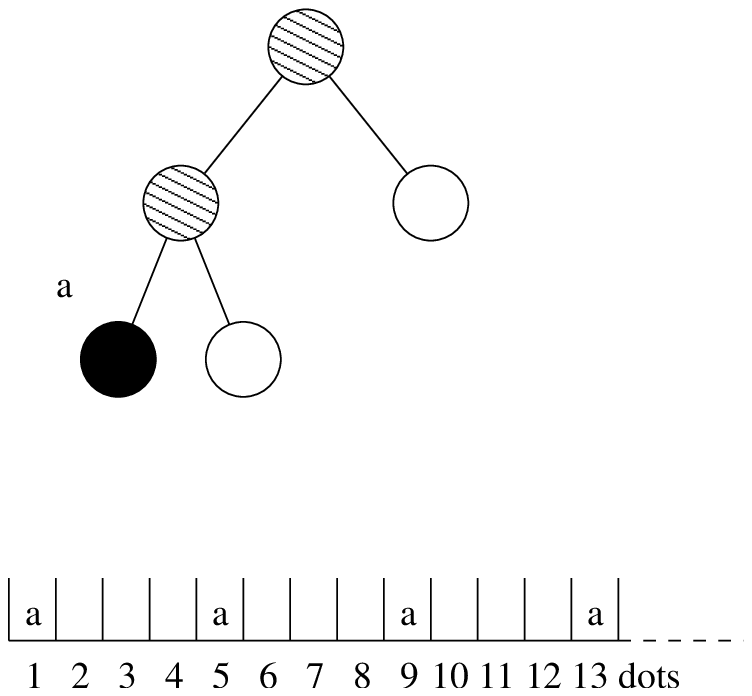}
\label{fig:treeOneClient}
}
\hfill
\subfigure[Some clients assigned.]{
\includegraphics[width=0.31\textwidth]{./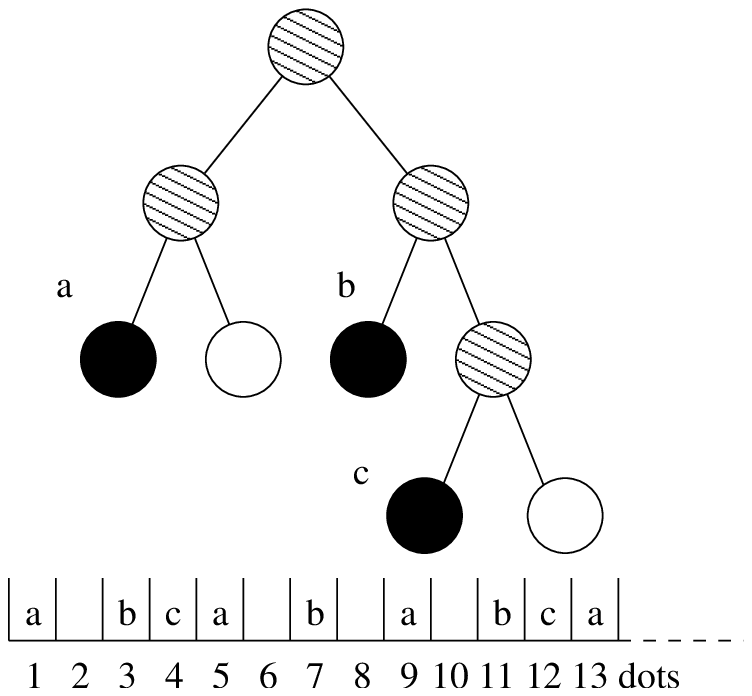}
\label{fig:treeSomeClients}
}
\caption{Illustration of a binary broadcast tree.
(a) A depth-2 tree corresponds to periodic broadcast of period $2^2$.
(b) Clients are assigned to leaves, e.g., client $c$ with laxity $4$ is assigned the black node meaning time slot $1, 5, 9$, etc.\ are reserved for it.
(c) Open leaf (white node) corresponds to available slot.} 
\end{center}
\end{figure}

\parhead{WS algorithms}
In~\cite{chan2005temporary} Chan et al. presented a WS algorithm that allocates clients with laxities that are powers of $2$ preserving the following invariant. For each station, the broadcast tree modeling the station schedule has at most one available leaf at each depth. In order to preserve this invariant, when a client departs from a tree, the remaining clients in the same tree are rearranged for free. 
This invariant allows to upper bound the space available at each tree, but if reallocations among trees are possible, the same idea can be extended to all trees simultaneously. Indeed, that is the approach followed in the algorithm \emph{Preemptive Reallocation} (PR)~\cite{Farach-ColtonLMT14}, maintaining the invariant that  \emph{throughout all trees} there is at most one available leaf at each depth. For laxities that are powers of $2$, PR achieves an optimal station usage of $H(C(t))$ for time slot $t$, where $H(C(t))=\lceil\sum_{c\in C(t)}1/w_c\rceil$, because the sum of all empty leaves (i.e., the sum of the inverse of laxities of all clients that could be placed in those leaves) is less than $1$. Such guarantee is met re-establishing the invariant each time a client departs, possibly through reallocations among trees, at a constant cost per client reallocated between trees (within the tree are still free). It was shown experimentally that for various inputs the number of clients reallocated, amortized on the number of arrivals and departures, is constant~\cite{Farach-ColtonLMT14}. 
However, we show in Lemma~\ref{lemma:PRunbounded} that there are arrival/departure schedules for which the amortized cost in PR is unbounded.
Furthermore, 
we show in Lemma~\ref{lemma:PRexponential} that if we simply modify PR to reallocate the sibling subtree of smaller weight (rather than the subtree with less clients) to restore the invariant, there are arrival schedules for which the reallocation-cost ratio is exponential for our cost function (Equation~\ref{eq:cost}). 

A WS algorithm with provable bounded reallocation cost guarantees was shown also in~\cite{Farach-ColtonLMT14}. The protocol, called \emph{Classified Reallocation} (CR), guarantees that all clients assigned to the same station have the same laxity, except for one distinguished station that handles all laxities linear and above. At any time $t$, CR has an additive overhead on station usage of at most $1+\log (\min\{\max_{c\in C(t)} w_c,\lceil\lceil C(t)\rceil\rceil\}/\min_{c\in C(t)} w_c)$~\footnote{Throughout, $\log$ means $\log_2$ unless otherwise stated.}, for laxities that are powers of $2$. To attain constant amortized reallocation cost, clients are moved to/from the distinguished station only after the number of clients in the system has halved/doubled. 
However, for the reallocation cost function in Equation~\ref{eq:cost}, that is a reallocation cost that depends on the resource requirements of the clients reallocated, CR has an arbitrarily bad reallocation cost ratio, as we show in Lemma~\ref{lemma:CRunbounded}. 

\begin{algorithm}[t]
\SetKwFor{Upon}{upon}{do}{endupon}
\SetKwData{Const}{constant}
\SetKwData{Log}{logarithmic}
\SetKwData{Lin}{linear}
\SetKwFunction{allocate}{allocate}
\SetKwFunction{consolidate}{consolidate}
\SetKwFunction{findLaxityClass}{findLaxityClass}
\SetKwInOut{Input}{input}
\SetKwInOut{Output}{output}
\SetKwFunction{algo}{algo}
\SetKwFunction{proc}{proc}
\SetKwProg{myalg}{Algorithm}{}{}
\SetKwProg{myproc}{Procedure}{}{}
\SetKwProg{myfunc}{Function}{}{}
\DontPrintSemicolon
\myalg{}{
\Upon{arrival or departure of a client $c$}{
	\lIf{arrival}{
			\allocate{$c,\langle w_{low},w_{high}\rangle$}
	}
	\lElse{
			\consolidate{$c,\langle w_{low},w_{high}\rangle$}
		}
	}
}

\myproc{\allocate{$c,\langle w_{low},w_{high}\rangle$}}{
	\For{each depth $i=\lfloor\log w_c\rfloor-\lceil\log w_{low}\rceil$ down to $0$}{
   		\For{each active station $s$ of class $\langle w_{low},w_{high},1/\lfloor\lfloor B/b_c\rfloor\rfloor\rangle$}{
			\If{ there is a leaf $\ell$ available at depth $i$ in the broadcast tree of $s$}{
				allocate to $\ell$ a new subtree with client $c$ assigned at depth $\lfloor\log w_c\rfloor-i-\lceil\log w_{low}\rceil$ of the broadcast subtree\;
				\KwRet\;
			}
		}
   	}
	activate a new station $s$ in class $\langle w_{low},w_{high},1/\lfloor\lfloor B/b_c\rfloor\rfloor\rangle$\;
	choose one of the leaves $\ell$ at depth $0$ of the broadcast subtrees of $s$\;
	allocate to $\ell$ a new subtree with client $c$ assigned at depth $\lfloor\log w_c\rfloor-\lceil\log w_{low}\rceil$ of the broadcast subtree\;
}
\myproc{\consolidate{$c,\langle w_{low},w_{high}\rangle$}}{
	\For{each depth $i=\lfloor\log w_c\rfloor-\lceil\log w_{low}\rceil$ down to $1$}{
		\lIf{ there are two active stations of class $\langle w_{low},w_{high},1/\lfloor\lfloor B/b_c\rfloor\rfloor\rangle$ both with a leaf at depth $i$ available}{
			reallocate sibling subtree of smaller weight\label{realloc1}
		}
		\lElse{\KwRet}
	}
	\tcp{reallocations cleared a whole broadcast subtree}
	\lIf{ there are two active stations of class $\langle w_{low},w_{high},1/\lfloor\lfloor B/b_c\rfloor\rfloor\rangle$ with empty broadcast subtrees}{
		reallocate a subtree from the station with at least one empty subtree to the station with exactly one empty subtree\label{realloc2}
	}
}
\caption{Classified Preemptive Reallocation. $\lfloor\lfloor x\rfloor\rfloor$ is the largest power of $2$ that is not larger than $x$. We represent the transmission schedules with broadcast trees. A node with both children available becomes an available leaf. A station with no client assigned becomes non-active. $\langle w_{low},w_{high}\rangle$ are the boundaries of the class of the input client. Refer to Algorithm~\ref{alg:classifier} 
for further details on the classification.}
\label{alg:cpr}
\end{algorithm}

\parhead{Classified Preemptive Reallocation}
The negative results in Lemmas~\ref{lemma:PRunbounded},~\ref{lemma:PRexponential}, and~\ref{lemma:CRunbounded} apply to WS. Given that WS is a particular case of SA fixing $b_c=B$ for all clients, the same negative results apply to SA. Thus, should the reallocation cost be maintained low, a new approach is needed.
We present now an online SA protocol (Algorithm~\ref{alg:cpr})  
which we call \emph{Classified Preemptive Reallocation} (CPR), that provides guarantees in station usage and reallocation cost. The protocol may be summarized as follows. Clients are classified according to laxity and bandwidth requirements. Upon arrival, a client is allocated to a station within its corresponding class to guarantee a usage excess (with respect to optimal) of at most one station per class plus one station throughout all classes. Upon departure of a client, if necessary to maintain the above-mentioned guarantee, clients are reallocated, but only within the corresponding class. 
The protocol includes three different classifications providing different trade-offs between reallocation cost and station usage. We recreate the idea of broadcast trees, but now we have multiple trees representing the schedule of each station. On one hand, we use broadcast trees with depth bounded by the class laxities. We call them \defn{broadcast subtrees} to reflect that they are only part of a regular broadcast tree. On the other hand, we have the multiplicity yielded by the shared station capacity $B$. An example of broadcast subtrees can be seen in Figure~\ref{fig:alloc}. Further details 
follow.

\begin{algorithm}[htb]
\SetKwFor{Upon}{upon}{do}{endupon}
\SetKwData{Const}{constant}
\SetKwData{Log}{logarithmic}
\SetKwData{Lin}{linear}
\SetKwFunction{allocate}{allocate}
\SetKwFunction{consolidate}{consolidate}
\SetKwFunction{findLaxityClass}{findLaxityClass}
\SetKwInOut{Input}{input}
\SetKwInOut{Output}{output}
\SetKwFunction{algo}{algo}
\SetKwFunction{proc}{proc}
\SetKwProg{myalg}{Algorithm}{}{}
\SetKwProg{myproc}{Procedure}{}{}
\SetKwProg{myfunc}{Function}{}{}
\DontPrintSemicolon
\myfunc{\findLaxityClass{c,factor}}{
	\lIf{$1\leq\lfloor\lfloor w_c\rfloor\rfloor<2$ }{\KwRet{$\langle 1,2\rangle$}}
	\lIf{$2\leq\lfloor\lfloor w_c\rfloor\rfloor<4$ }{\KwRet{$\langle 2,4\rangle$}}
	$w \leftarrow 4$\;
	\If{$factor=\Const$}{
		\While(\tcp*[f]{$w_{high}=2w_{low}$}){$\lfloor\lfloor w_c\rfloor\rfloor\geq 2w$}{
			$w \leftarrow 2 w$\;
		}
		\KwRet{$\langle w,2w\rangle$}\;
	}
	\ElseIf{$factor=\Log$}{
		\While(\tcp*[f]{$w_{high}=w_{low}\log_2 w_{low}$}){$\lfloor\lfloor w_c\rfloor\rfloor\geq w\log_2 w$}{
			$w \leftarrow w\log_2 w$\;
		}
		\KwRet{$\langle w,w\log_2 w\rangle$}\;
	}
	\Else(\tcp*[h]{$factor=\Lin$}){
		\While(\tcp*[f]{$w_{high}=w_{low}^2$}){$\lfloor\lfloor w_c\rfloor\rfloor\geq w^2$}{
			$w \leftarrow w^2$\;
		}
		\KwRet{$\langle w,w^2\rangle$}\;
	}
}
\caption{Class Computation. $\lfloor\lfloor x\rfloor\rfloor$ is the largest power of $2$ that is not larger than $x$. The parameter $factor$ indicates how the client classes are defined.}
\label{alg:classifier}
\end{algorithm}

The mechanism to allocate an arriving client can be described as follows. Upon arrival, a client $c$ is classified according to its laxity and bandwidth requirement. Specifically, $c$ is assigned to a class for clients with bandwidth requirement $B/\lfloor\lfloor B/b_c\rfloor\rfloor$ and laxity in $[w_{low},w_{high})$, for some $w_{low}$ and $w_{high}$ that depend on the classification chosen, as shown in Algorithm~\ref{alg:classifier}. 
Notice that each station has up to $\lfloor\lfloor B/b_c\rfloor\rfloor\cdot\lceil\lceil w_{low}\rceil\rceil$ subtrees. That is, $\lfloor\lfloor B/b_c\rfloor\rfloor$ ways to share its capacity $B$ and $\lceil\lceil w_{low}\rceil\rceil$ ways to share its schedule (see Figure~\ref{fig:alloc}).
Within its class, we assign $c$ to an available leaf at depth $\lfloor\log w_c\rfloor-\lceil\log w_{low}\rceil$ in any subtree in the forest (see Figure~\ref{fig:alloc1}). 
If there is no such leaf available, we look at smaller depths up in the forest one by one. If we find an available leaf at depth $\lceil\log w_{low}\rceil \leq i<\lfloor\log w_c\rfloor-\lceil\log w_{low}\rceil$, we allocate to that leaf a new subtree with $c$ assigned at depth $\lfloor\log w_c\rfloor-i$ with respect to the root of the broadcast subtree (see Figures~\ref{fig:alloc2} and~\ref{fig:alloc3} ). If no such leaf is available at any depth, a new broadcast subtree $T$ is created with $c$ assigned at depth $\lfloor\log w_c\rfloor-\lceil\log w_{low}\rceil$, and $T$ is assigned to a newly activated station. Refer to Algorithm~\ref{alg:cpr} 
for further details.

\begin{figure}[t]
\begin{center}
\psfrag{station1}{Station $1$}
\psfrag{station2}{Station $2$}
\subfigure[Arrival of client $i$ with $w_i=8$.]{
\includegraphics[width=0.4\textwidth]{./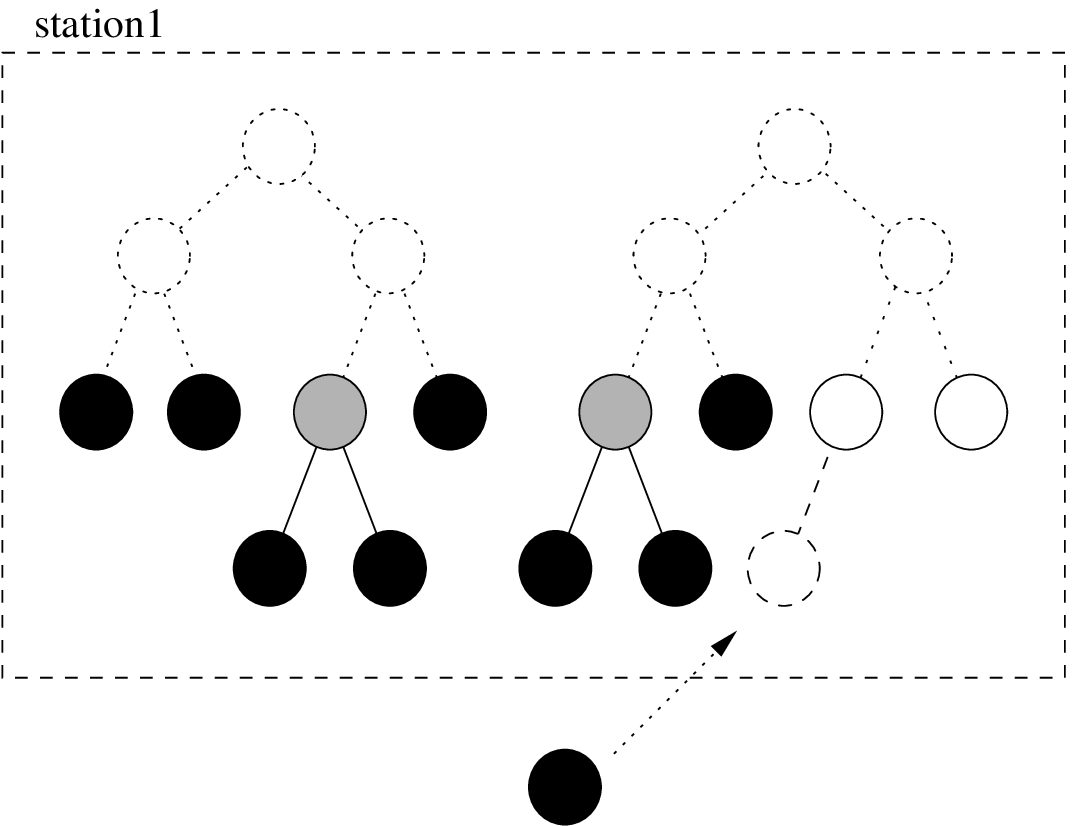}
\label{fig:alloc2}
}
\hfill
\subfigure[Arrival of client $j$ with $w_j=4$.]{
\includegraphics[width=0.4\textwidth]{./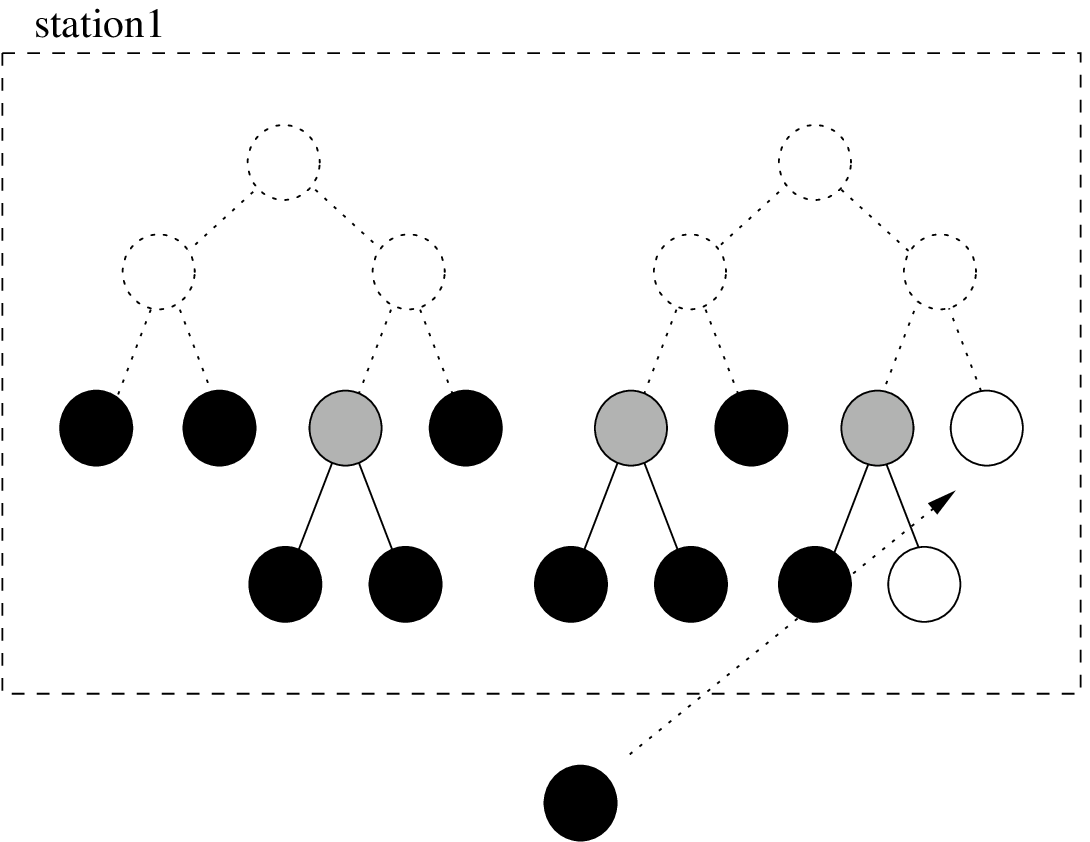}
\label{fig:alloc1}
}
\hfill
\subfigure[Arrival of client $k$ with $w_k=4$.]{
\includegraphics[width=0.80\textwidth]{./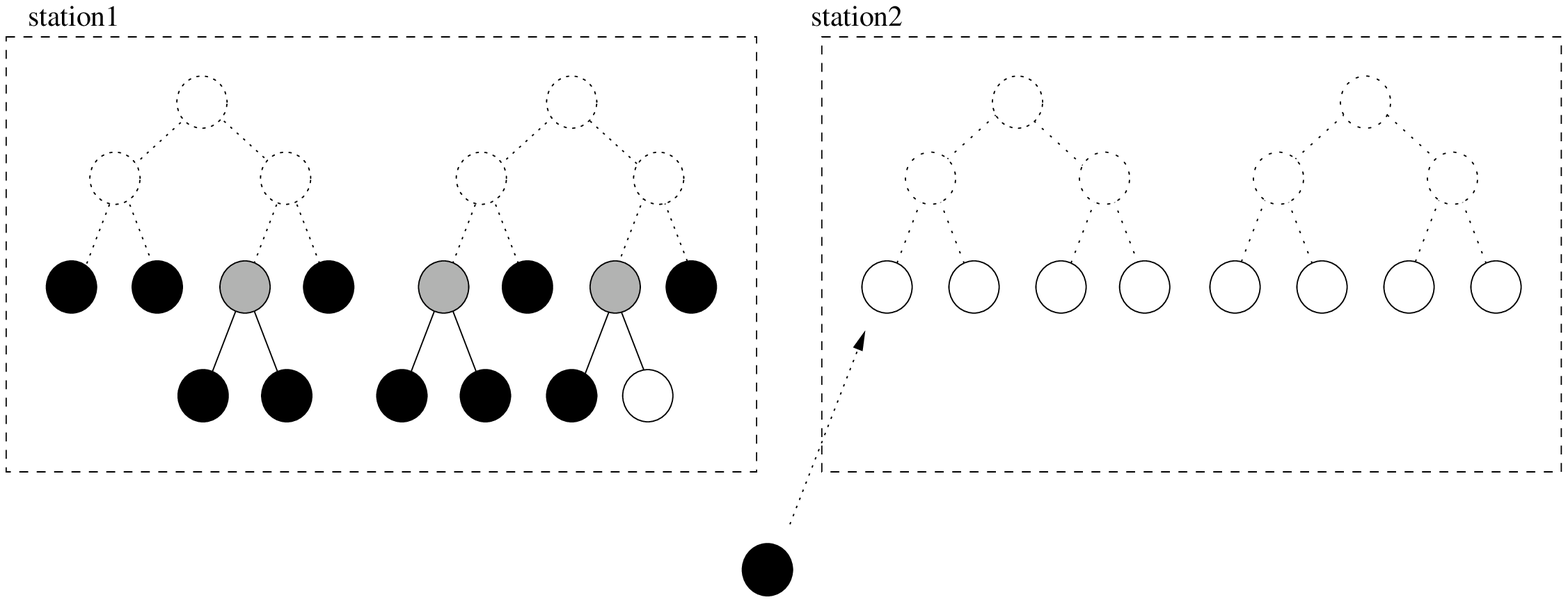}
\label{fig:alloc3}
}
\caption{Illustration of allocation mechanism. Class: laxities $[4,16)$, bandwidth $1/2$. Subtrees are depicted connected to a broadcast tree to reflect their location in the station schedule.} 
\label{fig:alloc}
\end{center}
\end{figure}

The above allocation mechanism maintains the following invariant: (1) there is at most one leaf available at any depth larger than $\lceil\log w_{low}\rceil$ of the forest, and (2) there is at most one station with leaves available at depth $\lceil\log w_{low}\rceil$ (an empty broadcast subtree).
When a client departs, this invariant is re-established through reallocations as follows. 
When a client $c$ departs, if $\lfloor\log w_c\rfloor>\lceil\log w_{low}\rceil$, we check if there was already a leaf $\ell$ available at depth $\lfloor\log w_c\rfloor-\lceil\log w_{low}\rceil$. If there was one, either the sibling of $c$ or the sibling of $\ell$ has to be reallocated to re-establish the invariant. We greedily choose to reallocate whichever sibling has smaller weight of the two (see Figure~\ref{fig:alloc4}). The process does not necessarily stop here because, if $\lfloor\log w_c\rfloor-1>\lceil\log w_{low}\rceil$ and there was a leaf already available at depth $\lfloor\log w_c\rfloor-1-\lceil\log w_{low}\rceil$, together with the newly available leaf at depth $\lfloor\log w_c\rfloor-1-\lceil\log w_{low}\rceil$ due to the reallocation at depth $\lfloor\log w_c\rfloor-\lceil\log w_{low}\rceil$, it yields two leaves available at depth $\lfloor\log w_c\rfloor-1-\lceil\log w_{low}\rceil$. Hence, again one of the sibling subtrees has to be reallocated (see Figure~\ref{fig:alloc5}). This transitive reallocations upwards the forest may continue until a depth where no reallocation is needed or until the depth $\lceil\log w_{low}\rceil+1$ is reached, when the reallocation leaves a broadcast subtree empty. In the latter case, we reallocate a whole broadcast subtree so that only one station has empty subtrees and the invariant is re-established. Refer to Algorithm~\ref{alg:cpr} 
for further details.

\begin{figure}[t]
\begin{center}
\psfrag{station1}{Station $1$}
\psfrag{station2}{Station $2$}
\psfrag{depart}{depart}
\psfrag{realloc}{reallocate}
\subfigure[Departure of client $j$ with $w_j=4$.]{
\includegraphics[width=0.8\textwidth]{./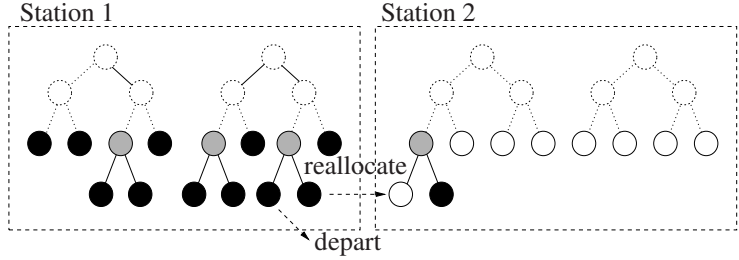}
\label{fig:alloc4}
}
\hfill
\subfigure[Upwards reallocation of sibling with smaller weight.]{
\includegraphics[width=0.8\textwidth]{./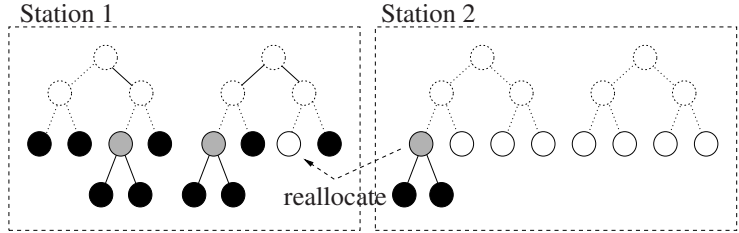}
\label{fig:alloc5}
}
\caption{Illustration of reallocation mechanism. Class: laxities $[4,16)$, bandwidth $1/2$. After the second reallocation Station $2$ is left empty and, hence, deactivated. Subtrees are depicted connected to a broadcast tree to reflect their location in the station schedule.} 
\end{center}
\end{figure}

Notice that when a client is reallocated (even within a station) its laxity may be violated once. Consider for instance the schedule in Figure~\ref{fig:treeSomeClients}. Let $w_a=4$, that is, $a$ is transmitting at its lowest possible frequency. If at the end of time slot $7$ client $b$ departs, at the beginning of time slot $8$ client $a$ will be reallocated to the slot of client $b$, that is, to transmit next in slot $11$. This new schedule violates $w_a$ because the previous slot when $a$ transmitted was $5$. For WS, in~\cite{chan2005temporary} the issue is approached making a client transmit once more within the original schedule. As the authors say, this approach introduces a  transition delay. In their model, there is no impact on station usage because their ratio is against peak load. However, for a ratio against current load such as our model, reserving a slot for a client in more than one station implies an overhead on station usage. Indeed, for any given allocation/reallocation policy, an adversarial input can be shown so that either the laxity is stretched or the station usage is not optimal. Hence, in our model we assume that when a client is reallocated the laxity may be stretched, folding the cost in the reallocation cost.


\section{Analysis}
\label{section:analysis}

We start with negative results in Lemmas~\ref{lemma:PRunbounded},~\ref{lemma:PRexponential}, and~\ref{lemma:CRunbounded}, which apply to WS, and to SA fixing $b_c=B$ for all clients. 
The proofs are all based on showing an adversarial client set for which the claim holds.

\begin{lemma}
\label{lemma:PRunbounded}
There exists a client arrival/departure schedule such that, in Preemptive Reallocation~\cite{Farach-ColtonLMT14}, the ratio of number of clients reallocated against the number of arrivals plus departures is unbounded.
\end{lemma}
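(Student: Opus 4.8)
The plan is to construct an adversarial arrival/departure schedule in which a single repeated local event (one departure followed by one arrival) forces Preemptive Reallocation to reallocate many clients, while the number of arrivals plus departures charged for that burst is only a constant. Repeating this event indefinitely drives the ratio to infinity. First I would set up a forest state that PR must maintain: recall that PR keeps the invariant that across all broadcast trees there is at most one available leaf at each depth. I would build, using only clients with laxities that are powers of $2$, a configuration that has (say) exactly one tall ``spine'' tree of depth $D$ that is completely full, plus a second tree with a single available leaf at depth $D$. The key is to arrange things so that removing one strategically placed client near the top of the spine creates a second available leaf at a shallow depth, which cascades: restoring ``at most one available leaf per depth'' forces PR to move a whole subtree up, which in turn creates a duplicate available leaf one level shallower, and so on down to the root. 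Each such cascade reallocates $\Theta(D)$ clients (one per level, or a subtree whose size I can make $\Theta(D)$), yet costs only $1$ departure.

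Next I would show how to return to a configuration isomorphic to the starting one using $O(1)$ arrivals, so the cycle can repeat. After the cascade, the forest is in a ``compacted'' state; I would have the adversary insert one new client whose laxity equals that of the just-departed client. Because PR's allocation rule places an arriving client into the unique available leaf at the appropriate depth (or opens a new tree), this single arrival re-fills the spine and re-creates the depth-$D$ dangling leaf in the companion tree, restoring the original shape up to relabeling. Thus one full cycle consists of $1$ departure and $1$ arrival — total charge $2$ — but triggers $\Theta(D)$ reallocations. Running $N$ such cycles yields a ratio of at least $\Theta(N D)/\Theta(N) = \Theta(D)$, and since $D$ is a fixed parameter of the construction we can take it as large as we like; hence over the family of schedules the amortized number of reallocations per arrival-plus-departure is unbounded.

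The delicate part — and the main obstacle — is verifying that PR, following its stated tie-breaking and restoration rules exactly, is actually \emph{forced} into the long cascade rather than some cheap alternative: I must make sure the companion tree's available leaf sits at precisely the depth that collides with the newly freed leaf after the departure, so PR cannot discharge the invariant with a single local swap, and I must ensure there is no shallower available leaf elsewhere that would let PR short-circuit the cascade. This requires choosing the multiset of laxities so that every depth from $1$ up to $D$ is ``saturated'' (no slack leaves except the one intended), which in turn constrains the total weight $H$ of the instance — I would pick laxities $2,4,8,\dots,2^D$ with appropriate multiplicities (essentially a binary-counter-like configuration) so the forest has exactly the prescribed gaps. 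A secondary point to check is that PR's choice of \emph{which} sibling subtree to move (the one with fewer clients) does not accidentally avoid the expensive branch; I would make the two candidate subtrees at each cascade step have the same client count (or make the cheap-looking one still be part of the spine), forcing the cascade to propagate regardless. Once these saturation and tie-breaking details are pinned down, the counting argument above is routine.
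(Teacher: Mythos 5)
Your core mechanism is the right one --- a single departure colliding with the unique available leaf at its depth and forcing a cascade of sibling reallocations all the way up the forest --- and this is exactly the engine the paper uses. But there are two genuine gaps.

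First, the step you yourself flag as ``the delicate part'' is not a detail but the crux, and you do not resolve it. After a departure triggers the cascade, PR has merged sibling subtrees level by level and (typically) emptied and deactivated a tree; the forest is in a compacted state whose pairing structure differs from the one you started with. Your claim that a single arrival of the same laxity ``restores the original shape up to relabeling'' is asserted, not argued, and it is far from clear it can be made to hold: the arriving client is placed into whichever single available leaf PR finds, and there is no reason the depth-$D$ collision configuration is regenerated at every level. Without a verified periodic reset, the repeated-cycle accounting ($1$ departure $+$ $1$ arrival $\Rightarrow \Theta(D)$ reallocations, forever) does not go through.

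Second, even granting the reset, your construction produces, for each fixed $D$, a schedule whose ratio converges to $\Theta(D)$; you then pass to a \emph{family} of schedules. The lemma asserts the existence of a \emph{single} schedule on which the ratio is unbounded. The paper gets this directly by letting the cascade depth grow within one schedule: in round $r$ two clients of laxity $2^r$ arrive and one client of laxity $2^{r-1}$ departs, so the departure at the previously deepest level cascades through all $r$ levels built so far, giving $\Theta(r)$ reallocations against $3$ arrival/departure events per round and hence an unbounded cumulative ratio. This growing-depth design also eliminates the need for any isomorphic-reset claim, since each round the two fresh arrivals rebuild the deep collision for free. You could repair your proof by adopting the same idea --- increase $D$ over time rather than recycling a fixed-depth gadget --- at which point your argument essentially becomes the paper's.
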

\begin{proof}
Consider the following adversarial client arrival/departure schedule divided in rounds.
In the first round, $2$ clients of laxity $2$ arrive.
Then, for each round $r=2,3,4,\dots$, two clients of laxity $2^r$ arrive and, after these clients have been allocated, a client of laxity $2^{r-1}$ departs.
Figure~\ref{fig:PRunbounded} shows the status of the forest right before each departure. 

\begin{figure}[htbp]
\begin{center}
\psfrag{dep}{depart}
\psfrag{realloc}{reallocate}
\subfigure[First departure.]{
\includegraphics[width=0.23\textwidth]{./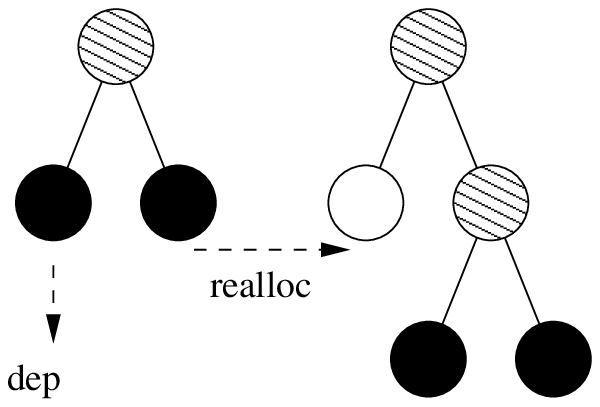}
\label{fig:1Dep}
}
\hfill
\subfigure[Second departure.]{
\includegraphics[width=0.23\textwidth]{./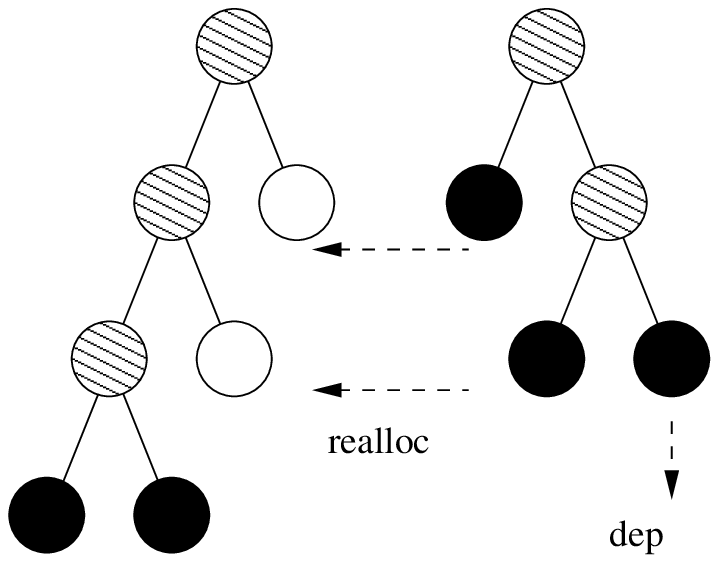}
\label{fig:2Dep}
}
\hfill
\subfigure[Third departure.]{
\includegraphics[width=0.23\textwidth]{./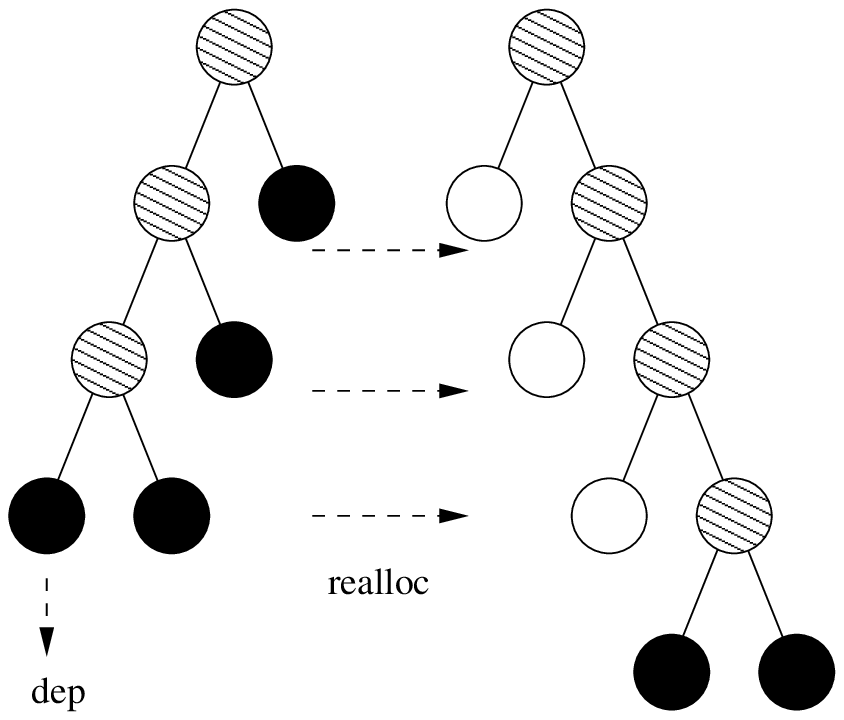}
\label{fig:3Dep}
}
\hfill
\subfigure[$i$th departure.]{
\includegraphics[width=0.23\textwidth]{./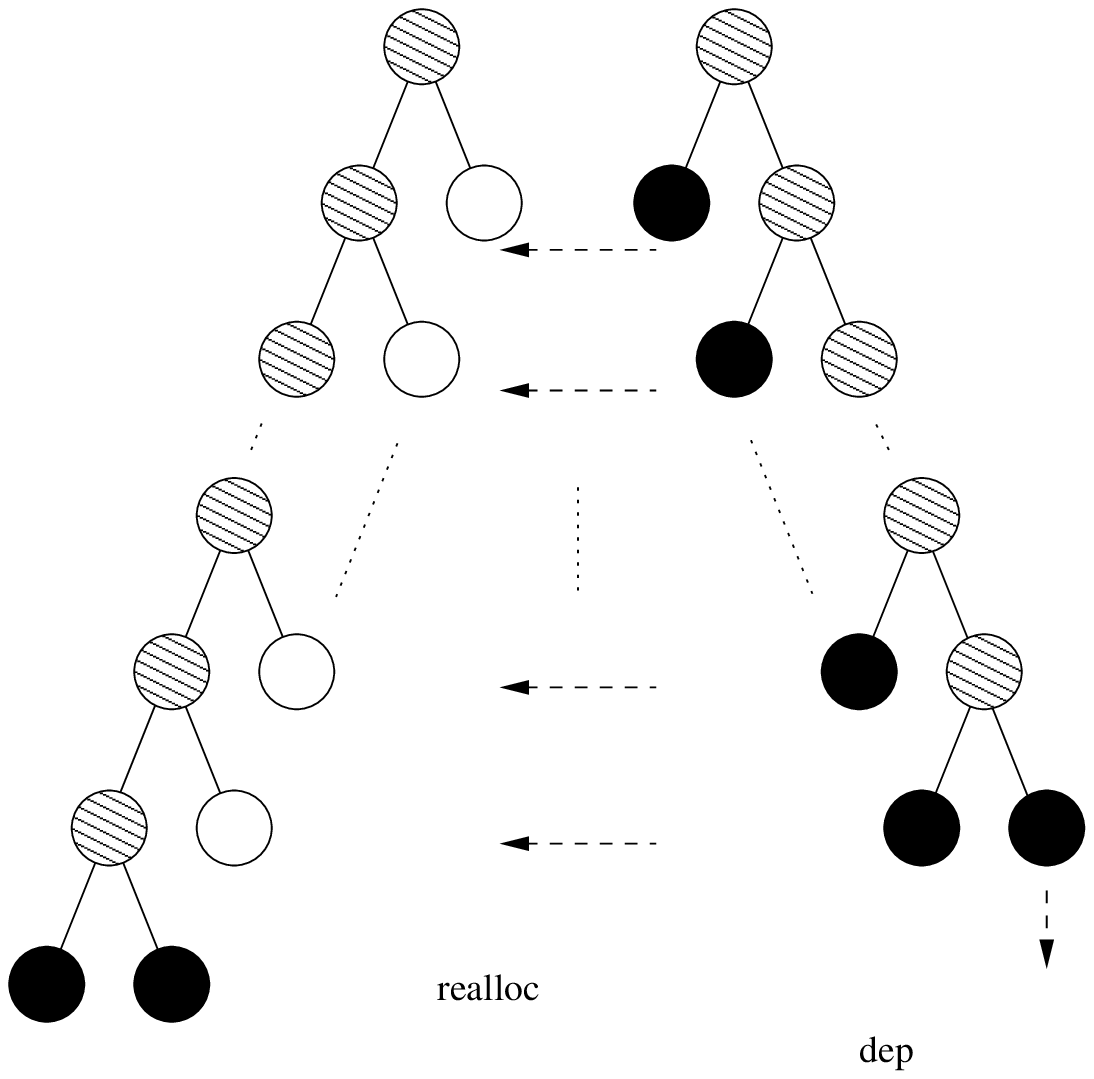}
\label{fig:manyDep}
}
\caption{Illustration of Lemma~\ref{lemma:PRunbounded}.} 
\label{fig:PRunbounded}
\end{center}
\end{figure}

To compute the reallocation cost, consider any round $r\geq 2$. After the departure, two leaves are left available at depth $r$ of the forest. For example, refer to Figure~\ref{fig:3Dep} depicting round $4$. After the client at depth $4$ departs, two leaves are left available at that depth. To restore the invariant, PR reallocates the sibling subtrees of the available leaves, so that they are assigned to the same parent node. In doing so, now two leaves are left available at depth $r-1$ of the forest. Because PR reallocates the subtree with less clients assigned, similar reallocations are repeated transitively up through the forest until one of the trees is left empty. (Refer to Figure~\ref{fig:3Dep}.) Then, the number of reallocated clients in round $r$ is $r$, whereas the number of arriving or departing clients in each round is always $3$. Given that the number of rounds is infinite, the overall reallocation cost ratio is unbounded.
\qed
\end{proof}

\begin{lemma}
\label{lemma:PRexponential}
For Preemptive Reallocation~\cite{Farach-ColtonLMT14}, modified so that the sibling subtree of smaller {\bf weight} is reallocated to restore the invariant, rather than the subtree with less clients, the following holds.
For any $d>0$, there exists a client arrival/departure schedule such that 
it is $\max_{t:\mathcal{R}(t)>0} \mathcal{R}(t)/\mathcal{D}(t) \geq \rho(2^d-1)^2/2^d$.
\end{lemma}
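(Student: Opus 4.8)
The plan is to construct an adversarial arrival/departure schedule, structured in rounds as in Lemma~\ref{lemma:PRunbounded}, but now engineered so that the single reallocation event forced at the end triggers a long cascade in which each reallocated subtree is heavy relative to the weight $\mathcal{D}(t)$ of the single light client that departed. The key difference from Lemma~\ref{lemma:PRunbounded} is that we no longer care about the \emph{number} of clients moved; we care about their total weight $\sum 1/w_c$. So I would arrange the forest, just before the final departure, so that at every depth $i$ from some large value down to a small value, the sibling subtree of smaller weight that the modified PR chooses to reallocate still has weight bounded below by a quantity that does not shrink too fast as we go up the tree. Then the cascade contributes a geometric-type sum of subtree weights, which I will show totals at least $(2^d-1)^2/2^d$ (times $\rho$), while the departed client has weight $1/w_c$ with $w_c$ large, so $\mathcal{D}(t)$ is tiny — in fact I expect to take $w_c = 2^d$ or similar so that $\mathcal{D}(t)=1/2^d$, which is where the $2^d$ in the denominator of the bound comes from and the $(2^d-1)^2$ in the numerator comes from the accumulated reallocated weight.

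Concretely, I would build the forest so that right before the final departure there is, at each depth $i$ for $\lceil\log w_{low}\rceil < i \le d$ (with $w_{low}$ effectively $1$ here since this is WS), exactly one available leaf, and the sibling subtree at that depth has been populated by previously-arrived clients of carefully chosen laxities so that its weight is roughly $1 - 2^{-(d-i)}$ or some similar expression bounded away from $0$ for all relevant $i$. When the final client of laxity $2^d$ departs it creates a second available leaf at depth $d$; PR must reallocate one of the two sibling subtrees; since the modified rule picks the lighter sibling, I need to ensure that \emph{both} siblings are still heavy, so that even the lighter one has weight $\Omega(1)$. Moving it up creates two available leaves one level shallower, forcing another reallocation, and so on down to depth $1$. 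Summing the weights of the $d$ reallocated subtrees gives, after the calculation, $\sum_{i=1}^{d}(\text{something like } (2^i-1)/2^i)$ which I will massage into the closed form $(2^d-1)^2/2^d$; the precise laxity assignments in each round are chosen to make this telescoping/geometric sum come out exactly to that value. Since all reallocations happen in one reallocation event at the time $t$ of the final departure, $\mathcal{R}(t)$ equals $\rho$ times that sum and $\mathcal{D}(t)=1/w_c$, giving the claimed ratio.

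The main obstacle I anticipate is the bookkeeping needed to guarantee simultaneously that (a) the invariant of PR is genuinely maintained after every earlier arrival, so that the forest really has the claimed shape just before the final departure; (b) at each depth in the cascade the \emph{lighter} of the two sibling subtrees — the one PR actually moves — still has weight large enough; and (c) the total reallocated weight lands exactly on $(2^d-1)^2/2^d$ rather than merely $\Theta(d)$ or $\Theta(2^{-\text{something}})$. Point (b) is delicate because the modified tie-breaking rule works \emph{against} us, so the construction must keep both siblings balanced in weight at every level, which constrains how the earlier clients' laxities can be distributed. I expect to handle this by populating each sibling subtree with a symmetric pair of sub-subtrees built from clients whose laxities double at each level, so that every subtree rooted at depth $i$ that participates in the cascade has weight of the form $k/2^{d-i}$ for an integer $k$ close to $2^{d-i}$, making the lighter-sibling choice harmless. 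Once the shape is pinned down, the final inequality is a routine summation, so the real work — and the only genuinely nontrivial step — is exhibiting the schedule and verifying the invariant is preserved throughout, which I would present with an accompanying figure analogous to Figure~\ref{fig:PRunbounded}.
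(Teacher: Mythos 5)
Your plan is essentially the paper's proof: the paper first lets a client of laxity $2^d$ arrive, then fills the forest with $d$ ``rows'' of clients of laxities $2^{i},2^{i+1},\dots,2^{i+d-1}$ (one row per subtree that the cascade will later touch, so that the lighter sibling chosen by the modified rule is still nearly full), then adds a second laxity-$2^d$ client and lets the first one depart, triggering reallocations at depths $d$ down to $2$ with total reallocated weight $\sum_{i=2}^{d+1}\sum_{j=0}^{d-1}2^{-(i+j)}=(2^d-1)^2/2^{2d}$, which divided by $\mathcal{D}(t)=2^{-d}$ gives the claimed ratio. One correction to your bookkeeping before you work out the schedule: the cascade's accumulated weight cannot be ``at least $(2^d-1)^2/2^d$'' as you state, since each reallocated subtree is rooted at depth at least $1$ and only one is moved per depth, so the total reallocated weight is necessarily below $1$; the entire $2^d$ blow-up in the ratio comes from the tiny denominator $\mathcal{D}(t)=1/2^d$, and correspondingly your intermediate sum $\sum_{i}(2^i-1)/2^i\approx d$ cannot be massaged into $(2^d-1)^2/2^d\approx 2^d$ --- the right target for the numerator is $(2^d-1)^2/2^{2d}=(1-2^{-d})^2$.
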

\begin{proof}
Given $d>0$, consider the following adversarial client arrival/departure schedule divided in phases.
First a client of laxity $2^d$ arrives.
After this client was assigned, a sequence of clients arrive one by one so that a new client arrives only after the previous client was assigned. The sequence of laxities of those clients is the following. 
\begin{align*}
&2^{d+1},2^{d+2},\dots,2^{2d-1},2^{2d},\\
2^{d},&2^{d+1},\dots,2^{2d-2},2^{2d-1}, \\
2^{d-1},2^{d},&2^{d+1},\dots,2^{2d-2}, \\
\dots\\
2^{2},2^{3},\dots,2^{d-1},2^{d},&2^{d+1}.
\end{align*} 
Then, another client of laxity $2^d$ arrives.
Figure~\ref{fig:expPR3r0} illustrates the assignment of clients by PR for $d=3$.
Finally, after all clients have been assigned, the client that arrived first departs.
No other client arrives or departs afterwards.
The client departure leaves two leaves available at depth $d$.
Then, the sibling subtree of smaller weight is reallocated (refer to Figure~\ref{fig:expPR3r0}).
In turn, this reallocation leaves two leaves available at depth $d-1$, which triggers the reallocation of the sibling subtree of smaller weight  (refer to Figure~\ref{fig:expPR3r1}).
These transitive reallocations continue upwards the tree depth-by-depth up to depth $2$ (refer to Figure~\ref{fig:expPR3r2}), when the last reallocation leaves one of the trees empty  (refer to Figure~\ref{fig:expPR3r3}).

\begin{figure}[htbp]
\begin{center}
\psfrag{dep}{depart}
\psfrag{realloc}{reallocate}
\subfigure[Before first reallocation.]{
\includegraphics[width=0.4\textwidth]{./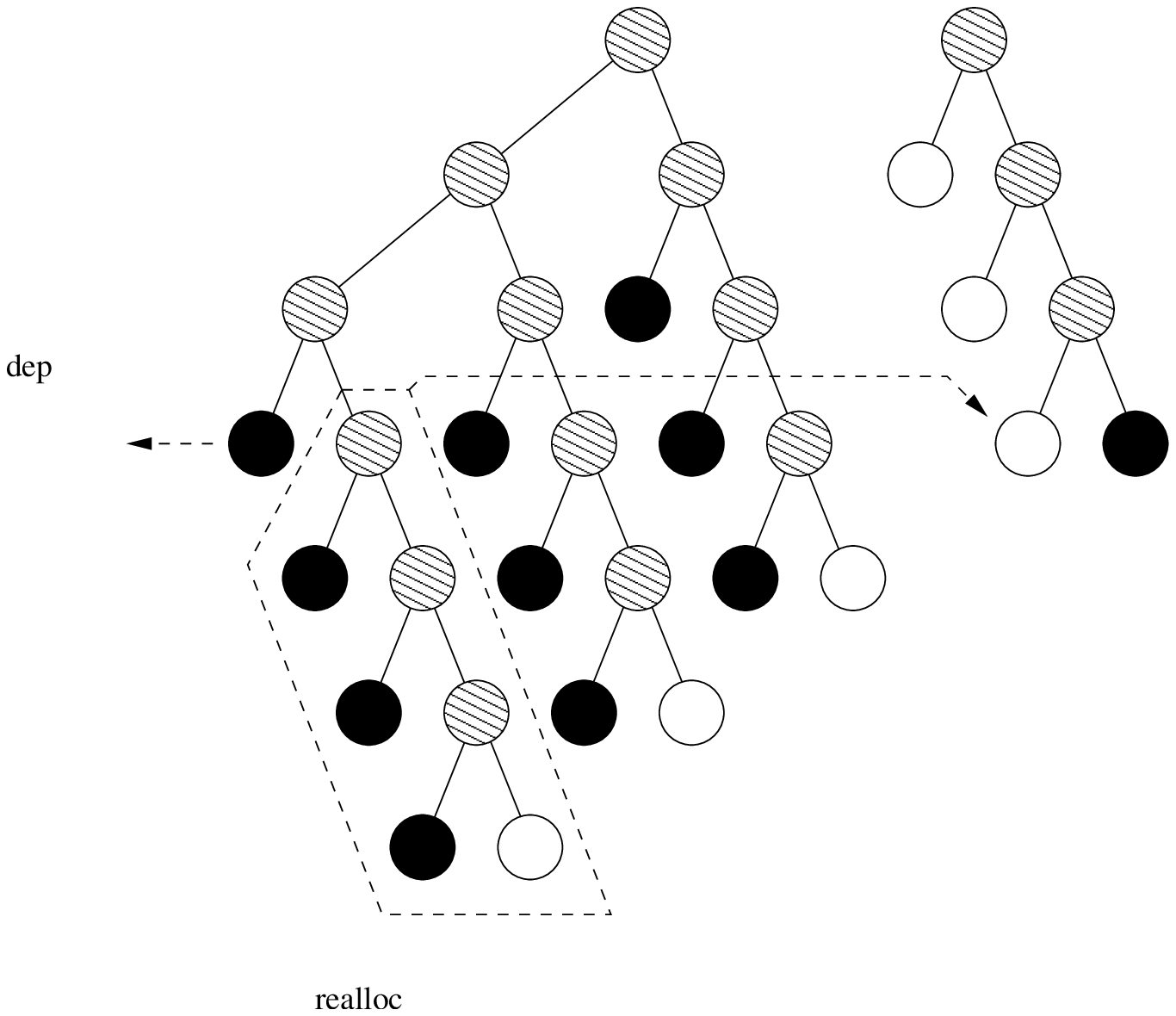}
\label{fig:expPR3r0}
}
\hfill
\subfigure[Before second reallocation.]{
\includegraphics[width=0.4\textwidth]{./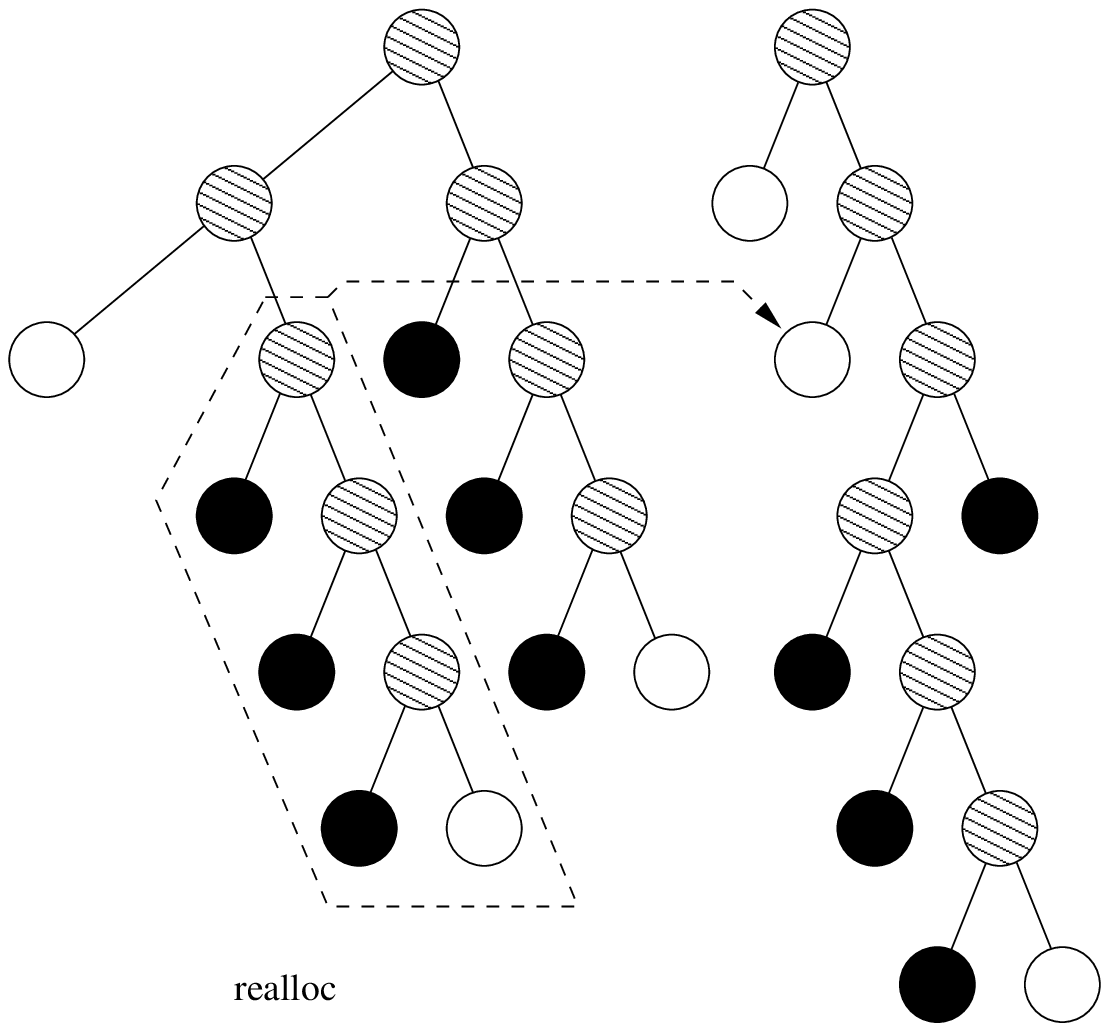}
\label{fig:expPR3r1}
}
\hfill
\subfigure[Before third reallocation.]{
\includegraphics[width=0.4\textwidth]{./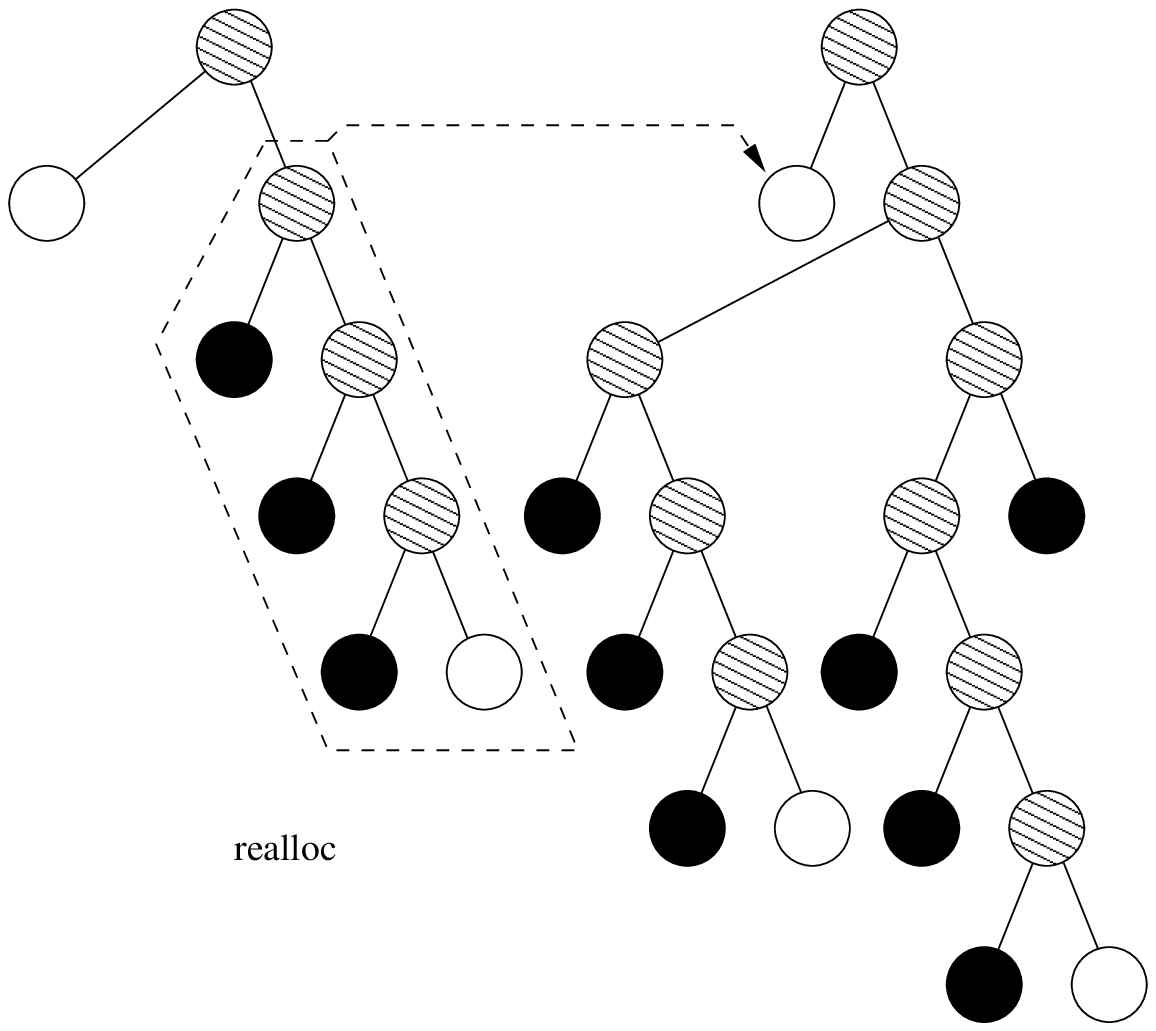}
\label{fig:expPR3r2}
}
\hfill
\subfigure[Final assignment.]{
\includegraphics[width=0.4\textwidth]{./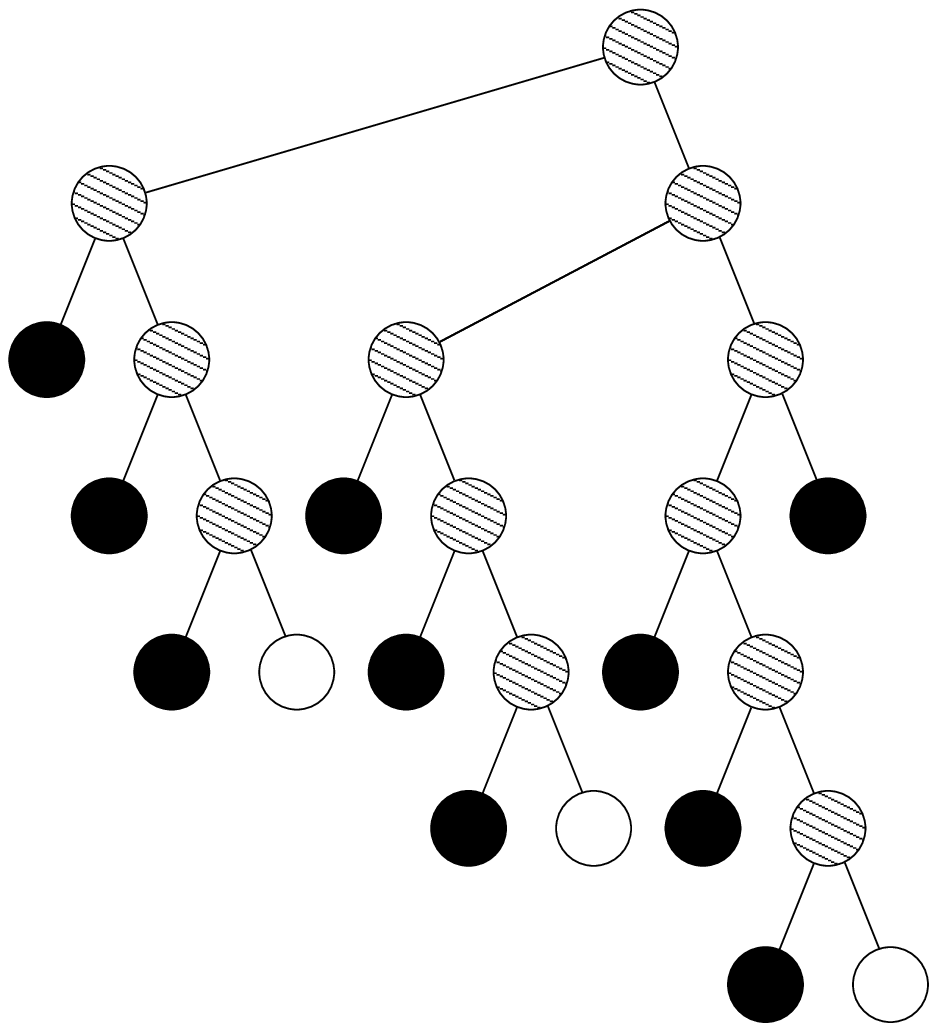}
\label{fig:expPR3r3}
}
\caption{Illustration of Lemma~\ref{lemma:PRexponential}.} 
\label{fig:expPR3}
\end{center}
\end{figure}

Then, at the time slot $t$ when all clients have been reallocated, we have
\begin{align*}
\frac{\mathcal{R}(t)}{\mathcal{D}(t)} 
&= \frac{\rho\sum_{c\in R(t)}1/w_c}{1/2^d}\\
&= \frac{\rho\sum_{i=2}^{d+1}\sum_{j=0}^{d-1}1/2^{i+j}}{1/2^d}\\
&= \frac{\rho(2^d-1)^2}{2^d}.
\end{align*}

\vspace{-5ex}
\qed
\end{proof}

\begin{lemma}
\label{lemma:CRunbounded}
For any integer $x>0$ and any $w\geq 2^{x+5}$ arbitrarily big such that $w$ is a power of 2, there exists a client arrival/departure schedule such that, in Classified Reallocation~\cite{Farach-ColtonLMT14}, we have $\max_{t:\mathcal{R}(t)>0} \mathcal{R}(t)/\mathcal{D}(t) \geq \frac{\rho/4}{7\cdot2^x}w$.
\end{lemma}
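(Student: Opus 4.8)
The plan is to build an adversarial arrival/departure schedule that exploits the defining feature of Classified Reallocation (CR): migrations to and from the distinguished station are triggered by the \emph{count} of active clients doubling or halving, with no regard for how much \emph{resource} (weight $1/w_c$) the migrated clients carry. I therefore want a departure sequence made entirely of clients with very large laxity $w$ (hence negligible weight $1/w$) whose only effect is to halve the active count, thereby forcing CR to migrate a whole channel's worth of \emph{small}-laxity clients (laxity $\approx 2^x$, hence weight $\approx 2^{-x}$ each) back onto the distinguished station. That makes $\mathcal{R}(t)=\Theta(\rho)$ while $\mathcal{D}(t)=\Theta(2^x/w)$, for a ratio $\Theta(\rho w/2^x)$.

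Concretely, recall the two properties of CR used in the analysis: (i) at time $t$, clients of laxity at least $\lceil\lceil |C(t)|\rceil\rceil$ (``linear and above'') all share the distinguished station, while clients of smaller laxity live on dedicated per-laxity stations; and (ii) a client is migrated to/from the distinguished station only once $|C(t)|$ has doubled/halved, at which point CR resets its reference count. Fix a power of $2$, $w\ge 2^{x+5}$. In stage one I inject $k$ clients of laxity exactly $2^x$ and $m$ clients of laxity $w$ with $k+m=2^{x+1}$ and $2^{x-2}\le k\le 2^x$, adding them so that the active count climbs monotonically from $0$ to $2^{x+1}$. While the count is at most $2^x$ the laxity-$2^x$ clients satisfy $2^x\ge\lceil\lceil|C(t)|\rceil\rceil$ and sit on the distinguished station; at the doubling that brings the count to $2^{x+1}$ (where $\lceil\lceil 2^{x+1}\rceil\rceil=2^{x+1}>2^x$) CR migrates them off to dedicated station(s) and resets its reference to $2^{x+1}$. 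In stage two I depart the $2^x$ laxity-$w$ clients one at a time; each departure merely leaves a gap on the distinguished station (re-packing within a station is free), so no migration occurs until the count reaches $2^x$, at which point CR sees a halving and, since $\lceil\lceil 2^x\rceil\rceil=2^x\le 2^x$, migrates all $k$ laxity-$2^x$ clients back onto the distinguished station in one reallocation event at some time $t$.

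The computation is then immediate: since exactly $k\ge 2^{x-2}$ clients of laxity $2^x$ are reallocated, $\mathcal{R}(t)=\rho k/2^x\ge\rho/4$; and since the last reallocation before $t$ was the stage-one doubling at count $2^{x+1}$, the clients departed since then are precisely the $2^{x+1}-2^x=2^x$ laxity-$w$ clients removed in stage two, so $\mathcal{D}(t)=2^x/w$. Hence $\mathcal{R}(t)/\mathcal{D}(t)=\rho k w/2^{2x}\ge \rho w/(4\cdot 2^x)\ge \frac{\rho/4}{7\cdot 2^x}w$, the last inequality being loose; the slack in the constants $1/4$ and $7$ is what absorbs any rounding in CR's definition of ``linear'' and any factor-$2$ ambiguity about exactly which clients CR keeps off the distinguished station.

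The step I expect to be the main obstacle is stage one: ensuring CR's reference count is pinned at $2^{x+1}$ so that the halving trigger in stage two lands exactly at count $2^x$. Growing the count from $0$ crosses every power of two, each a potential migration-and-reset, and the laxity-$2^x$ clients themselves start out on the distinguished station and must be migrated off as the count grows past $2^x$; I need the arrival order chosen so that the \emph{last} stage-one migration is the one at count $2^{x+1}$ and fixes the reference there (it does, under either reading of rule (ii), because no migration is forced before that doubling). A secondary point to verify is that departures of the laxity-$w$ clients in stage two never force an inter-station reallocation, which holds because they all reside on the single distinguished station, so $\mathcal{D}(t)$ genuinely accumulates all $2^x$ of them before the triggering event. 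Should CR's precise reference/rounding rules force a somewhat larger $n_0$ or a somewhat smaller migrated set, the argument degrades only in the constants and stays within the stated bound.
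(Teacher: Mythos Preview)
Your high-level strategy is exactly the paper's: flood the system with large-laxity clients so that their departures, carrying almost no weight, force CR to migrate a block of small-laxity clients back to the distinguished station. The gap is in your stage-two trigger. CR's rule for moving a client \emph{onto} the big channel is $w_c > 2\lceil\lceil |C(t)|\rceil\rceil$, not $w_c \ge \lceil\lceil |C(t)|\rceil\rceil$; the factor~$2$ is precisely the hysteresis that makes the ``only after halving/doubling'' statement true. With your parameters (small clients of laxity $2^x$, count dropped to $2^x$) the test reads $2^x > 2\cdot 2^x$, which fails, so no migration back occurs and $\mathcal{R}(t)=0$. Your hedge about ``factor-$2$ ambiguity'' does not cover this: you are off by a strict factor of~$2$ on the wrong side of a strict inequality.

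This is why the paper's construction uses small-laxity $2^{x+2}$ (not $2^x$) together with $7\cdot 2^x$ large-laxity clients, for a total of $2^{x+3}$. After all $7\cdot 2^x$ departures the count is $2^x$, and now $2^{x+2} > 2\cdot 2^x$ holds, so the $2^x$ small clients are migrated, giving $\mathcal{R}(t)=\rho\cdot 2^x/2^{x+2}=\rho/4$ against $\mathcal{D}(t)=7\cdot 2^x/w$. The otherwise mysterious constants $1/4$ and $7$ in the statement are forced exactly by this factor-$2$ gap in CR's thresholds; once you correct the small-client laxity to $2^{x+2}$ and enlarge the pool of large-laxity clients accordingly, your argument becomes the paper's.
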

\begin{proof}
We use the terminology ``channel'' in~\cite{Farach-ColtonLMT14} in this proof.
The thresholds to reallocate from/to the big channel in CR are the following~\cite{Farach-ColtonLMT14}. For any time $t$, if a client $c$ allocated to the big channel has laxity $w_c<\lceil\lceil |C(t)| \rceil\rceil$, $c$ is reallocated to other channel according to $w_c$, call it $w_c$-channel. On the other hand, if at any time $t$ a client $c$ that is \emph{not} allocated to the big channel has laxity $w_c>2\lceil\lceil |C(t)| \rceil\rceil$, then $c$ is reallocated to the big channel.

Consider an adversarial scenario where the system has $2^x$ clients with laxity $2^{x+2}$ and $7 \cdot 2^x$ clients with laxity $w$, where $w$ is a power of $2$ such that $w\geq 2^{x+5}$. (The order in which these clients have arrived is irrelevant.) Because the total number of clients is $2^{x+3}$, the clients with laxity $w\geq 2^{x+5}>2\cdot 2^{x+3}$ are allocated to the big channel, whereas the clients with laxity $2^{x+2}<2^{x+3}$ are allocated to a ($2^{x+2}$)-channel.
After these clients have been allocated, adversarially, all the clients with laxity $w$ depart. Because the new number of clients in the system is now $2^x$, the remaining clients, all with laxity $2^{x+2}>2\cdot 2^x$ have to be reallocated to the big channel. Then, at time $t$ after reallocation, the following holds.
\begin{align*}
\frac{\mathcal{R}(t)}{\mathcal{D}(t)} 
&= \frac{\rho\sum_{c\in R(t)}1/w_c}{\sum_{c\in D(t)}1/w_c}\\
&= \frac{\rho/4}{7\cdot2^x}w.
\end{align*}
\qed
\end{proof}

The above lemmas show that the application of previous WS reallocation algorithms to SA is not feasible. 
The following theorem gives guarantees on station usage and reallocation cost for CPR.
The proof 
starts by analyzing CPS to show that the invariant is re-established after each arrival or departure. Then, competitiveness on station usage is derived from the invariant properties. 
Finally, to bound $\beta$, a worst case scenario minimizing the weight of departed clients and maximizing the reallocated weight is shown.

\begin{theorem}
\label{thm:main}
At any time slot $t$, CPR achieves an $(\alpha,\beta)$-\name as follows.
\begin{align*}
\alpha &= \max_t \frac{4(1+ \Gamma(ALG,t) + S(OPT,t))}{S(OPT,t)}\\
\beta &= \max_t\rho(2\lfloor\lfloor w_{high_{\max}}(t)\rfloor\rfloor/\lceil\lceil w_{low_{\max}}(t)\rceil\rceil - 1).
\end{align*}
Where $\Gamma(ALG,t)$ is the number of classes used by CPR at time $t$, and $w_{high_{\max}}(t)$ and $w_{low_{\max}}(t)$ are the maximum upper and lower limits of a class at time $t$.
\end{theorem}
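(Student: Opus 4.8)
The plan is to split the argument into two independent parts, matching the two quantities $\alpha$ and $\beta$, after first establishing the structural invariant that makes both parts go through.

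\textbf{Step 1: The invariant is maintained.} I would first prove by induction on the sequence of arrival/departure events that, within each class $\langle w_{low},w_{high},1/\lfloor\lfloor B/b_c\rfloor\rfloor\rangle$ in use, the forest of broadcast subtrees of that class satisfies: (1) at most one available leaf at each depth strictly greater than $\lceil\log w_{low}\rceil$, and (2) at most one station holding an entirely empty broadcast subtree at depth $\lceil\log w_{low}\rceil$. For arrivals, the \textsf{allocate} procedure scans depths from deepest to shallowest and either fills an existing available leaf (which cannot create a new one at a larger depth) or activates a new station; I would check that when a deep available leaf is used to host a new subtree, any leaves created at smaller depths inside that new subtree are matched against the at-most-one-per-depth bound, relying on the fact that before the arrival there was at most one available leaf per depth. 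For departures, the \textsf{consolidate} procedure merges the two siblings around the two leaves that become available at a given depth (one freshly vacated, one pre-existing by the invariant), and this merge transitively pushes a single freed leaf one level up; I would argue the process terminates either when no pre-existing available leaf is found at the next level (invariant restored) or at depth $\lceil\log w_{low}\rceil+1$, where the final step empties a broadcast subtree and the \texttt{realloc2} line restores property (2).

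\textbf{Step 2: Station-usage ratio $\alpha$.} Using the invariant, I would bound $S(ALG,t)$. Within one class, the total weight ``missing'' from the non-empty part of the forest is $\sum_{i>\lceil\log w_{low}\rceil}$ (at most one leaf of weight $\le 2^{-i}/\lceil\lceil w_{low}\rceil\rceil$ scaled appropriately) $< 1$ unit of class-laxity capacity, plus at most one nearly-empty station; so the number of active stations in a class is at most the optimal-for-that-class plus a small additive constant. The factor-$4$ slack in the theorem presumably absorbs the rounding in $\lfloor\lfloor B/b_c\rfloor\rfloor$, $\lfloor\lfloor w_c\rfloor\rfloor$ versus $w_c$, and $\lceil\log\cdot\rceil$ versus $\log$; I would make this explicit by comparing the per-station ``useful capacity'' $\lfloor\lfloor B/b_c\rfloor\rfloor\cdot\lceil\lceil w_{low}\rceil\rceil$ against what an optimal allocation could pack, losing at most a constant factor $4$ from the four independent floor/ceil roundings. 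Summing the per-class ``$+1$ stations'' over the $\Gamma(ALG,t)$ classes, adding the single global leftover station, and dividing by $S(OPT,t)$ yields the stated $\alpha$. The point to be careful about is that $S(OPT,t)$ may coordinate clients across classes, so the per-class lower bound on $S(OPT,t)$ must be stated as a lower bound, not equality; summing these lower bounds gives $S(OPT,t)\ge\sum_{\text{classes}}S(OPT_{\text{class}},t)$, which is enough.

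\textbf{Step 3: Reallocation ratio $\beta$.} Here I would pin down exactly which clients get reallocated when a single client $c$ departs. By Step 1 the reallocation cascade walks up from depth $\lfloor\log w_c\rfloor-\lceil\log w_{low}\rceil$ to depth $1$ (inside the subtree) and then possibly the one extra \texttt{realloc2} move; at each depth it moves the sibling subtree of \emph{smaller weight}. The worst case for $\beta=\mathcal R(t)/\mathcal D(t)$ is $\mathcal D(t)=1/w_c$ as small as possible, i.e. $c$ has the largest laxity in its class, namely $w_c$ close to $w_{high}$, so $1/\mathcal D(t)\approx\lfloor\lfloor w_{high}\rfloor\rfloor$; and the reallocated weight as large as possible, bounded by summing, over all depths touched, the weight of a full sibling subtree, which telescopes to at most $2\lfloor\lfloor w_{high_{\max}}\rfloor\rfloor/\lceil\lceil w_{low_{\max}}\rceil\rceil-1$ times $\mathcal D(t)$ after multiplying by $\rho$. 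Taking the max over $t$ (hence over all classes ever in use, whose limits are $w_{low_{\max}}(t),w_{high_{\max}}(t)$) gives the claimed $\beta$. The subtlety — and what I expect to be the main obstacle — is the ``smaller weight'' rule: I must show that even though at each step we pick the lighter of two siblings, the adversary can still force the heavier-looking cascade, and more importantly that the \emph{sum} of the chosen (lighter) sibling weights is genuinely bounded by the stated geometric-type quantity rather than something larger; this requires arguing that a lighter sibling at depth $i$ has weight at most $1/\lceil\lceil w_{low}\rceil\rceil$ (a full subtree rooted at depth $i$ within the class has total inverse-laxity weight $\le 1/\lceil\lceil w_{low}\rceil\rceil$ regardless of depth, since it spans $2^{\,\text{something}}$ leaves each of laxity scaled by $\lceil\lceil w_{low}\rceil\rceil$) and then summing the $O(\log(w_{high}/w_{low}))$ levels — except the bound in the theorem is stated in terms of $w_{high}/w_{low}$ linearly, not logarithmically, so the correct accounting must charge a \emph{full} subtree (weight up to $2^{\text{depth}}/(2^{\text{depth}}\lceil\lceil w_{low}\rceil\rceil)$ corrected for the subtree height, i.e. up to $\lfloor\lfloor w_{high}\rfloor\rfloor/\lceil\lceil w_{low}\rceil\rceil$ at the top) at the top level and geometrically smaller ones below, summing to the stated $2\lfloor\lfloor w_{high_{\max}}\rfloor\rfloor/\lceil\lceil w_{low_{\max}}\rceil\rceil-1$. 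Getting this weight bookkeeping exactly right, including the final \texttt{realloc2} cross-station move, is the delicate part; the station-usage half is comparatively routine once the invariant is in hand.
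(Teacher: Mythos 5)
Your overall route is the same as the paper's: establish the per-class invariant by induction over arrivals and departures, convert the ``at most one available leaf per depth, plus one station with empty subtrees per class'' property into a station-usage bound with a factor $4$ for the power-of-two roundings, and bound the reallocation cascade by a geometric sum whose largest term is the whole broadcast subtree of weight $1/\lceil\lceil w_{low}\rceil\rceil$ moved by the final cross-station step. Your $\beta$ accounting lands exactly on the paper's computation, $1/\lceil\lceil w_{low}\rceil\rceil + 1/(2\lceil\lceil w_{low}\rceil\rceil)+\dots+1/\lfloor\lfloor w_{high}\rfloor\rfloor = 2/\lceil\lceil w_{low}\rceil\rceil - 1/\lfloor\lfloor w_{high}\rfloor\rfloor$, divided by a departed weight of $1/\lfloor\lfloor w_{high}\rfloor\rfloor$. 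The worry you raise about the ``smaller weight'' rule is not actually an obstacle: the bound never uses which sibling is chosen, only that any subtree rooted at cascade depth $i$ has weight at most that of a \emph{full} subtree rooted there, i.e.\ $2^{-i}/\lceil\lceil w_{low}\rceil\rceil$, and these terms form the geometric series above.

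The one genuine error is in Step 2. You write $S(OPT,t)\ge\sum_{\text{classes}}S(OPT_{\text{class}},t)$, but the inequality goes the other way: the union of per-class optimal assignments is a feasible global assignment, so $\sum_{\text{classes}}S(OPT_{\text{class}},t)\ge S(OPT,t)$, and it can exceed it by up to the number of classes (e.g.\ one tiny client per class). Consequently you cannot sum per-class bounds of the form ``$\le S(OPT_{\text{class}},t)+O(1)$'' and compare the total against the global $S(OPT,t)$. The correct glue --- which your own weight accounting already supplies --- is to charge each class's non-exceptional stations to the fractional weight of its clients, sum these weights across classes to get $H(C(t))=\lceil\sum_{c\in C(t)}1/w_c\rceil$ (which \emph{is} additive over the class partition), and use $H(C(t))\le S(OPT,t)$. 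That yields $S(ALG,t)\le 4\bigl(1+\Gamma(ALG,t)+S(OPT,t)\bigr)$ directly, as in the theorem statement.
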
 
\begin{proof}
We start by showing that the invariant in Algorithm~\ref{alg:cpr} is preserved. Recall that the invariant is the following. At any time slot $t$ and for any class of clients $\langle w_{low},w_{high},x\rangle$, there is at most one leaf available at any depth larger than $\lceil\log w_{low}\rceil$ of the forest. There might be more than one leaf available at depth $\lceil\log w_{low}\rceil$ (an empty broadcast subtree), but only in one station in the class. 

The arrival of clients does not change the invariant, but the departure of a client $c$ at a given depth $i>\lceil\log w_{low}\rceil$ may change the number of leaves available at depth $i$. If there was no leaf available at depth $i$ before the departure, the number of available leaves at depth $i$ is at most one after departure and the invariant is preserved. 
If, on the other hand, there was a leaf $\ell$ available at depth $i$, either the sibling of $c$ or the sibling of $\ell$ will be reallocated in Line~\ref{realloc1} of the algorithm. This reallocation leaves two sibling leaves available at depth $i$, which combined yield a leaf available at depth $i-1$. The same argument applies transitively upwards the tree. If the invariant is re-established before reaching depth $\lceil\log w_{low}\rceil$, we are done. If on the other hand a broadcast subtree is emptied, the invariant is re-established (if necessary) reallocating a whole broadcast subtree in Line~\ref{realloc2}. Notice that reallocating one subtree is enough to re-establish the invariant, since before the departure there was (at most) one station with empty subtrees, and the departure (possibly followed by reallocations) may empty only one subtree.

To bound $\alpha$, we observe that the invariant above guarantees that there is at most one station per class with empty broadcast subtrees. For the stations with non-empty subtrees, aggregating the at most one available leaf at each depth larger than $0$ (and smaller than $\lfloor\log w_{high}\rfloor$) of each forest, we have an additional available space of at most one station, throughout all classes. So, the overhead in station usage is the number classes plus one. Additionally, we have to take into account that clients are scheduled to transmission periods that are powers of $2$, and with a bandwidth that is a power of $2$ fraction of the capacity $B$, which introduces a multiplicative factor in station usage of at most $4$. Thus, we have
\begin{align*}
\max_{t} \frac{S(ALG,t)}{S(OPT,t)} 
&\leq \max_t \frac{4(1+ \Gamma(ALG,t) + S(OPT,t))}{S(OPT,t)} \\
\end{align*}

To bound $\beta$, we compute the maximum weight of clients reallocated upon a departure. We notice that, for any class of clients $\langle w_{low},w_{high},x\rangle$, in the worst case a departure at depth $\lfloor\log w_{high}\rfloor$ triggers transitive reallocations upwards up to depth $\lceil\log w_{low}\rceil-1$ in the forest, followed by a reallocation of a whole broadcast subtree of weight at most $1/\lceil\lceil w_{low}\rceil\rceil$. The aggregated weight of all those reallocations is then $1/\lceil\lceil w_{low}\rceil\rceil + 1/(2\lceil\lceil w_{low}\rceil\rceil)+ 1/(4\lceil\lceil w_{low}\rceil\rceil)+\dots +1/\lfloor\lfloor w_{high}\rfloor\rfloor = 2/\lceil\lceil w_{low}\rceil\rceil - 1/\lfloor\lfloor w_{high}\rfloor\rfloor$. Replacing, we obtain
\begin{align*}
\max_{t:\mathcal{R}(ALG,t)>0} \frac{\mathcal{R}(ALG,t)}{\mathcal{D}(ALG,t)} 
&\leq \max_{w_{low},w_{high}}\frac{\rho(2/\lceil\lceil w_{low}\rceil\rceil - 1/\lfloor\lfloor w_{high}\rfloor\rfloor)}{1/\lfloor\lfloor w_{high}\rfloor\rfloor}\\
&\leq \max_t\rho(2\lfloor\lfloor w_{high_{\max}}(t)\rfloor\rfloor/\lceil\lceil w_{low_{\max}}(t)\rceil\rceil - 1).
\end{align*}
\qed
\end{proof}

Instantiating Theorem~\ref{thm:main} in the classification factors of Algorithm~\ref{alg:classifier}, we obtain bounds for all three algorithms, shown in 
Corollary~\ref{cor:gral}. 
\begin{corollary}
\label{cor:gral}
At any time slot $t$, CPR achieves an $(\alpha,\beta)$-\name as follows.
\begin{enumerate}
\item
\pru{\textbf{Constant factor.}}
If the client classification boundaries are
$[w_i,w_{i+1})$, where $w_1=1$, and $w_i=2w_{i-1}$, for any $i>1$, then
\begin{align*}
\alpha &= 4\left(1+\frac{1+\left(1+\log \frac{\lceil\lceil B/b_{\min}(t)\rceil\rceil}{\lceil\lceil B/b_{\max}(t)\rceil\rceil}\right)\left(1+\log \frac{\lfloor\lfloor w_{\max}(t)\rfloor\rfloor}{\lfloor\lfloor w_{\min}(t)\rfloor\rfloor}\right)}{H(C(t))}\right)\\
\beta &= 3\rho.
\end{align*}
\item
\pru{\textbf{Logarithm factor.}}
If the client classification boundaries are
$[w_i,w_{i+1})$, where $w_1=1, w_2=2, w_3=4$, and $w_i=w_{i-1} \log w_{i-1}$, for any $i>3$, then
\begin{align*}
\alpha &= 4\left(1+\frac{1+\left(1+\log \frac{\lceil\lceil B/b_{\min}(t)\rceil\rceil}{\lceil\lceil B/b_{\max}(t)\rceil\rceil}\right)\left(1+\frac{\log\lfloor\lfloor w_{\max}(t)\rfloor\rfloor}{\log\log\max\{4,\lfloor\lfloor w_{\min}(t)\rfloor\rfloor\}}\right)}{H(C(t))}\right)\\
\beta &= \rho(2\log w_{\max}(t)-1).
\end{align*}
\item
\pru{\textbf{Linear factor.}}
If the client classification boundaries are
$[w_i,w_{i+1})$, where $w_1=1, w_2=2$, and $w_i=w_{i-1}^2$, for any $i>2$, then
\begin{align*}
\alpha &= 4\left(1+\frac{1+\left(1+\log \frac{\lceil\lceil B/b_{\min}(t)\rceil\rceil}{\lceil\lceil B/b_{\max}(t)\rceil\rceil}\right)\left(1+\log \frac{\log\max\{2,\lfloor\lfloor w_{\max}(t)\rfloor\rfloor\}}{\log\max\{2,\lfloor\lfloor w_{\min}(t)\rfloor\rfloor\}}\right)}{H(C(t))}\right)\\
\beta &= \rho\left(2\sqrt{w_{\max}(t)}-1\right).
\end{align*}
\end{enumerate}
Where $H(C(t))=\lceil\sum_{c\in C(t)}1/w_c\rceil$,
$w_{\max}(t) = \max_{c\in C(t)} w_c$,
$w_{\min}(t)= \min_{c\in C(t)} w_c$,
$b_{\max}(t) = \max_{c\in C(t)} b_c$,
and $b_{\min}(t)= \min_{c\in C(t)} b_c$.
\end{corollary}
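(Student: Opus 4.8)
The plan is to derive Corollary~\ref{cor:gral} by instantiating Theorem~\ref{thm:main} for each of the three classification factors in Algorithm~\ref{alg:classifier}, so the work reduces to counting the number of classes $\Gamma(ALG,t)$ in use at time $t$ and identifying the extreme class boundaries $w_{low_{\max}}(t)$ and $w_{high_{\max}}(t)$. First I would observe that in every classification the classes are a product of a ``bandwidth axis'' and a ``laxity axis'': a client is placed in the class determined by $\lfloor\lfloor B/b_c\rfloor\rfloor$ (a power of two) and by the laxity interval $[w_{low},w_{high})$ produced by \findLaxityClass. Since the relevant distinct values of $\lfloor\lfloor B/b_c\rfloor\rfloor$ range over powers of two between $\lfloor\lfloor B/b_{\max}(t)\rfloor\rfloor$ and $\lfloor\lfloor B/b_{\min}(t)\rfloor\rfloor$, there are at most $1+\log(\lceil\lceil B/b_{\min}(t)\rceil\rceil/\lceil\lceil B/b_{\max}(t)\rceil\rceil)$ bandwidth classes, which explains that factor appearing verbatim in all three $\alpha$ bounds. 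The remaining task per case is to count laxity intervals and bound $w_{high_{\max}}/w_{low_{\max}}$.

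For the constant factor, the boundaries double, so the number of laxity intervals covering the range of active laxities $[\lfloor\lfloor w_{\min}(t)\rfloor\rfloor,\lfloor\lfloor w_{\max}(t)\rfloor\rfloor]$ is $1+\log(\lfloor\lfloor w_{\max}(t)\rfloor\rfloor/\lfloor\lfloor w_{\min}(t)\rfloor\rfloor)$; multiplying the two axis counts and substituting into the $\alpha$ expression of Theorem~\ref{thm:main} gives the stated bound, while $w_{high_{\max}}=2w_{low_{\max}}$ forces $\beta=\rho(2\cdot 2-1)=3\rho$. For the logarithmic factor, the recurrence $w_i=w_{i-1}\log w_{i-1}$ makes the number of laxity intervals grow like $\log w_{\max}(t)/\log\log w_{\min}(t)$ (more precisely $1+\log\lfloor\lfloor w_{\max}\rfloor\rfloor/\log\log\max\{4,\lfloor\lfloor w_{\min}\rfloor\rfloor\}$ once one iterates the recurrence and takes logarithms), and since the top class has $w_{high}=w_{low}\log w_{low}$ with $w_{low}\le w_{\max}(t)$, Theorem~\ref{thm:main} yields $\beta\le\rho(2\log w_{\max}(t)-1)$. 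For the linear factor, $w_i=w_{i-1}^2$ means $\log w_i=2\log w_{i-1}$, so the laxity interval count is $1+\log(\log\max\{2,\lfloor\lfloor w_{\max}\rfloor\rfloor\}/\log\max\{2,\lfloor\lfloor w_{\min}\rfloor\rfloor\})$, and the top class satisfies $w_{high}=w_{low}^2\le w_{\max}(t)^2$ but also $w_{low}\le\sqrt{w_{\max}(t)}$ (since $w_{\max}(t)<w_{high}=w_{low}^2$ would be needed for a larger class to exist), giving $\beta\le\rho(2\sqrt{w_{\max}(t)}-1)$.

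The main obstacle I anticipate is being careful with the floor/ceiling-to-powers-of-two operators and with the off-by-one in iterating the class recurrences: one must verify that \findLaxityClass with a given factor really does produce exactly the boundary sequence claimed in the corollary, that the class containing $w_{\min}(t)$ and the class containing $w_{\max}(t)$ are the extreme ones, and that ``number of classes actually in use'' is bounded by (not merely comparable to) the product of the two per-axis counts even when the active clients do not populate every class in the range. Once those bookkeeping points are pinned down, each $\alpha$ is a direct substitution of $\Gamma(ALG,t)$ into the Theorem~\ref{thm:main} formula and each $\beta$ is a one-line evaluation of $\rho(2\lfloor\lfloor w_{high_{\max}}\rfloor\rfloor/\lceil\lceil w_{low_{\max}}\rceil\rceil-1)$ using the relation between $w_{high}$ and $w_{low}$ dictated by the classification.
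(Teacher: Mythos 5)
Your overall strategy is exactly the paper's: the published proof is the one-liner ``use $S(OPT,t)\ge H(C(t))$ and bound $\max_t\Gamma(ALG,t)$ and $\max_t\lfloor\lfloor w_{high_{\max}}(t)\rfloor\rfloor/\lceil\lceil w_{low_{\max}}(t)\rceil\rceil$ in Theorem~\ref{thm:main},'' and your bandwidth-axis times laxity-axis count of $\Gamma$, together with the constant- and logarithmic-factor evaluations of $w_{high}/w_{low}$ (giving $2$ and $\log w_{low}\le\log w_{\max}(t)$ respectively), is the intended instantiation and is fine.

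However, your justification of the linear-factor $\beta$ has a genuine error: you assert $w_{low}\le\sqrt{w_{\max}(t)}$ ``since $w_{\max}(t)<w_{high}=w_{low}^2$,'' but that inequality yields $w_{low}>\sqrt{w_{\max}(t)}$ --- the reverse direction. The class containing the largest active laxity satisfies $w_{low}\le\lfloor\lfloor w_{\max}(t)\rfloor\rfloor<w_{low}^2$, so the ratio $w_{high}/w_{low}=w_{low}$ can be as large as $\lfloor\lfloor w_{\max}(t)\rfloor\rfloor$ itself: with boundaries $\dots,[4,16),[16,256),\dots$ and $w_{\max}(t)=16$, the top class is $[16,256)$ and Theorem~\ref{thm:main} gives only $\rho(2\cdot 16-1)$, not $\rho(2\sqrt{16}-1)$. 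So the step as written does not derive $\beta=\rho\bigl(2\sqrt{w_{\max}(t)}-1\bigr)$ from Theorem~\ref{thm:main}; $\sqrt{w_{\max}(t)}$ is a lower bound, not an upper bound, on the relevant $w_{low}$. The paper's own proof gives no detail here, so the difficulty is arguably inherited from the source, but your parenthetical argument is precisely the step that fails if written out, and you should either repair it (e.g., by bounding in terms of the top class's upper limit rather than $w_{\max}(t)$) or flag the discrepancy.
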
 

\begin{proof}
Using that $S(OPT,t)\geq H(C(t))$,
and bounding the values of $\max_{t}\Gamma(ALG,t)$ and $\max_t \lfloor\lfloor w_{high_{\max}}(t)\rfloor\rfloor/\lceil\lceil w_{low_{\max}}(t)\rceil\rceil$ in Theorem~\ref{thm:main}, the claim follows.
\qed
\end{proof}

\pru{We note that the choice of classification factor gives a trade-off on the performance on station usage and reallocation cost,
i.e., the station usage improves as we move from constant to logarithm to linear factor while
the reallocation cost improves as we move from linear to logarithm to constant factor.
We comment that the logarithm classification gives good performance for both measurement.
}

To provide intuition,
we instantiate Corollary~\ref{cor:gral} on a setting where all laxities are powers of $2$ and all bandwidth requirements are the full capacity of a station, as follows.

\begin{corollary}
\label{cor:part}
For a set of clients $C$ such that, for all $c\in C$, it is $b_c=B$ and $w_c=2^i$ for some $i\geq 0$, and for all $t$ it is $w_{\max}(t)>w_{\min}(t)\geq 4$, the following holds. At any time slot $t$, CPR achieves an $(\alpha,\beta)$-\name as follows.
\begin{enumerate}
\item
If the client classification boundaries are
$[w_i,w_{i+1})$, where $w_1=1$, and $w_i=2w_{i-1}$, for any $i>1$, then
\begin{align*}
\alpha &= 1+\left(2+\log (w_{\max}(t)/w_{\min}(t))\right)/H(C(t))\\
\beta &= 3\rho.
\end{align*}
\item
If the client classification boundaries are
$[w_i,w_{i+1})$, where $w_1=1, w_2=2, w_3=4$, and $w_i=w_{i-1} \log w_{i-1}$, for any $i>3$, then
\begin{align*}
\alpha &= 1+\left(2+\log w_{\max}(t)/\log\log w_{\min}(t)\right)/H(C(t))\\
\beta &= \rho(2\log w_{\max}(t)-1).
\end{align*}
\item
If the client classification boundaries are
$[w_i,w_{i+1})$, where $w_1=1, w_2=2$, and $w_i=w_{i-1}^2$, for any $i>2$, then
\begin{align*}
\alpha &= 1+\left(2+\log (\log w_{\max}(t)/\log w_{\min}(t))\right)/H(C(t))\\
\beta &= \rho\left(2\sqrt{w_{\max}(t)}-1\right).
\end{align*}
\end{enumerate}
Where  $H(C(t))=\lceil\sum_{c\in C(t)}1/w_c\rceil$,
$w_{\max}(t) = \max_{c\in C(t)} w_c$,
$w_{\min}(t)= \min_{c\in C(t)} w_c$,
$b_{\max}(t) = \max_{c\in C(t)} b_c$,
and $b_{\min}(t)= \min_{c\in C(t)} b_c$.
\end{corollary}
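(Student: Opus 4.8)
The plan is to obtain Corollary~\ref{cor:part} as a specialization of Theorem~\ref{thm:main} (equivalently, of Corollary~\ref{cor:gral}) in which three features of the hypotheses collapse the general bounds. First, since every laxity is a power of $2$, the operators $\lfloor\lfloor\cdot\rfloor\rfloor$ and $\lceil\lceil\cdot\rceil\rceil$ act as the identity on laxities and on the constant/linear class boundaries (themselves powers of $2$). Second, since $b_c=B$ for every client, $\lfloor\lfloor B/b_c\rfloor\rfloor=1$, so CPR never subdivides a station's capacity: there is a single bandwidth class, and the bandwidth factor $1+\log(\lceil\lceil B/b_{\min}(t)\rceil\rceil/\lceil\lceil B/b_{\max}(t)\rceil\rceil)=1+\log 1=1$ in Corollary~\ref{cor:gral} vanishes. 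Third, since $w_{\min}(t)\ge 4$, every active client is handled by the recursive branch of Algorithm~\ref{alg:classifier}, so the guards $\max\{4,\cdot\}$ and $\max\{2,\cdot\}$ disappear. Most importantly, because transmission periods and bandwidths are \emph{already} powers of $2$ and the full capacity respectively, the multiplicative factor $4$ in the $\alpha$ bound of Theorem~\ref{thm:main} — charged for rounding periods to powers of $2$ and bandwidths to power-of-$2$ fractions of $B$ — is no longer incurred and should be replaced by $1$.

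For the $\alpha$ bounds I would re-run the station-usage argument inside the proof of Theorem~\ref{thm:main}: the invariant of Algorithm~\ref{alg:cpr} still gives an additive overhead of at most $1+\Gamma(ALG,t)$ stations over the optimum of the broadcast-tree model, but under these hypotheses that model coincides with the unrestricted one, so $S(ALG,t)\le S(OPT,t)+1+\Gamma(ALG,t)$; dividing by $S(OPT,t)\ge H(C(t))$ gives $\alpha\le 1+(1+\Gamma(ALG,t))/H(C(t))$. It then remains to bound $\Gamma(ALG,t)$, which — there being a single bandwidth class — equals the number of laxity classes meeting $[w_{\min}(t),w_{\max}(t)]$. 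For the constant factor the boundaries are consecutive powers of $2$, giving $\Gamma\le 1+\log(w_{\max}(t)/w_{\min}(t))$. For the logarithm factor, iterating $w_{i+1}=w_i\log w_i$ from a boundary exceeding $w_{\min}(t)$ multiplies the class boundary by at least $\log w_{\min}(t)$ per step, so $\Gamma\le 1+\log w_{\max}(t)/\log\log w_{\min}(t)$. For the linear factor, $w_{i+1}=w_i^2$ yields $w_i=4^{2^{i-3}}$ for $i\ge 3$, so a laxity $w$ sits in the class of index $2+\lfloor\log\log w\rfloor$ and hence $\Gamma\le 1+\log(\log w_{\max}(t)/\log w_{\min}(t))$. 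Plugging each into $\alpha\le 1+(1+\Gamma(ALG,t))/H(C(t))$ produces the three stated expressions.

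For the $\beta$ bounds nothing needs to change relative to Corollary~\ref{cor:gral}: the $\beta$ bound of Theorem~\ref{thm:main}, namely $\rho\bigl(2\lfloor\lfloor w_{high_{\max}}(t)\rfloor\rfloor/\lceil\lceil w_{low_{\max}}(t)\rceil\rceil-1\bigr)$, carries neither the factor $4$ nor any bandwidth term, and the worst-case cascade is confined to a single class. Bounding the width $\lfloor\lfloor w_{high_{\max}}(t)\rfloor\rfloor/\lceil\lceil w_{low_{\max}}(t)\rceil\rceil$ of that class exactly as in Corollary~\ref{cor:gral} — it equals $2$ for the constant factor, is at most $\log w_{\max}(t)$ for the logarithm factor, and is handled through $w_{high}=w_{low}^2$ for the linear factor — yields $\beta=3\rho$, $\beta=\rho(2\log w_{\max}(t)-1)$, and $\beta=\rho(2\sqrt{w_{\max}(t)}-1)$ respectively.

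The step I expect to be the real obstacle is the factor-$4$ reduction: one must argue carefully that, when laxities are powers of $2$ and all bandwidth requirements equal $B$, the optimal offline station count in CPR's broadcast-tree model is exactly $S(OPT,t)$, so the only loss is the additive $\Gamma(ALG,t)+1$ coming from the invariant. The remaining technical chore is the class count for the logarithm and linear factors, where one iterates the recurrences of Algorithm~\ref{alg:classifier} from the boundary condition $w_{\min}(t)\ge 4$ while keeping the floors consistent with the claimed closed forms; I would also double-check the linear $\beta$ against the cascade-length estimate in the proof of Theorem~\ref{thm:main}.
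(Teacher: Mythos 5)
Your proposal is correct and follows exactly the route the paper intends: Corollary~\ref{cor:part} is stated as a direct instantiation of Corollary~\ref{cor:gral} (hence of Theorem~\ref{thm:main}), where the power-of-$2$ laxities and the $b_c=B$ assumption make the rounding operators and the bandwidth factor trivial, eliminate the multiplicative factor $4$, and leave only the additive $1+\Gamma(ALG,t)$ overhead divided by $H(C(t))\leq S(OPT,t)$, while the $\beta$ bounds carry over unchanged. Your class-count and class-width estimates match those implicit in Corollary~\ref{cor:gral}, so no gap remains beyond the floor-level bookkeeping you already flag.
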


\section{Simulations}
In this section, we present the main results of our experimental simulations of the CPR algorithm. 
We highlight here that the classification factor (logarithmic) that maintains simultaneously station usage and reallocation cost below the maximum observed was found through experimentation with various functions. 
For the specific cases presented (constant, logarithmic, and linear factors) we have focused on a scenario 
where $\forall c\in C, b_c=1/2^i$, and $w_c=2^j,$ where $i,j\geq 0$ and $B$ was normalized to $1$.
For all the evaluations the reallocation cost of each client $c$ has been set to the inverse of its laxity $1/w_c$. That is, $\rho=1$, since the scaling factor $\rho$ is also a multiplicative factor in our reallocation metric and, hence, does not provide additional information about the performance of our protocols in terms of reallocation.

Our theoretical bounds on performance apply to worst-case scenarios. Hence, the purpose of these simulations is to complement those bounds evaluating how much better (if anything) our protocol behaves in practice for average cases. Given that the main feature of the protocol is to allocate (and reallocate) ``efficiently'', we aim to stress such feature considering 
inputs that entail extremal cases of arrivals. That is, smooth distributions of arrivals as well as 
batched arrivals. The set of inputs chosen are representative of those cases. Moreover, they are also
the customary choices in experimental evaluation for other problems such as job scheduling, packet routing, etc.
Other reallocation algorithms were not simulated since, to the best of our knowledge, this is the first time that restrictions on laxity and bandwidth under a reallocation cost proportional to resources requested have been considered.

We have produced various sets of clients (recall that each client is characterized by a time of arrival, a time of departure, a bandwidth, and a laxity). 
The laxity of each client was chosen independently at random from $\{1,2,4,\dots,w_{\max}\}$, for each $w_{\max}=1024, 4096,$ and $16384$. 
We evaluated three distributions over that range: uniform, biased towards small laxities, and biased towards large laxities. 
Biased \pru{means} probability $0.7$ of choosing from one half of the range (lower or higher), and then uniform probability within the half chosen.
The bandwidth of each client $c$ was chosen at random as $b_c=1/2^i$ with probability $1/2^i$ for each $i=1,2,\dots$
For each of $n=4000, 8000,$ and $16000$ clients, time was discretized in $2n$ slots.

\begin{figure}[h]
\begin{center}
\resizebox{.8\textwidth}{!}{\input{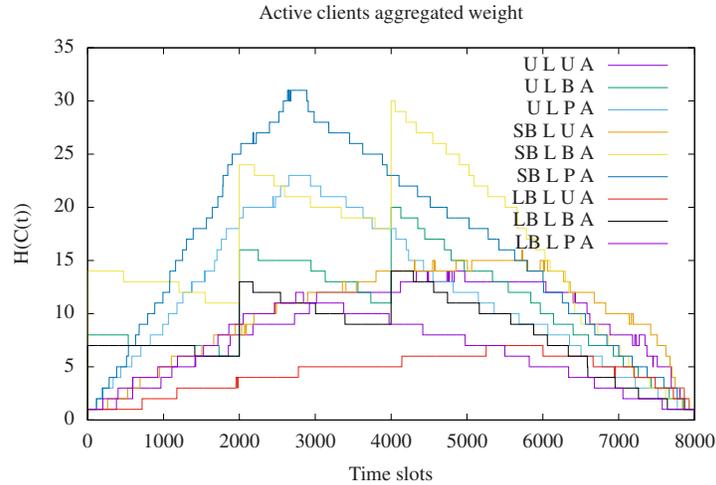}}
\caption{Cumulative inverse laxity ($H(C(t))$) vs. time for $n=4000$ and $w_{\max}=1024$.
Key: L: laxity, A: arrival, U: uniform, B: batched, P: Poisson, SB: small-biased, LB: large-biased.
}
\label{fig:H}
\end{center}
\end{figure}

\begin{figure}[h]
\subfigure{\resizebox{.85\textwidth}{!}{\input{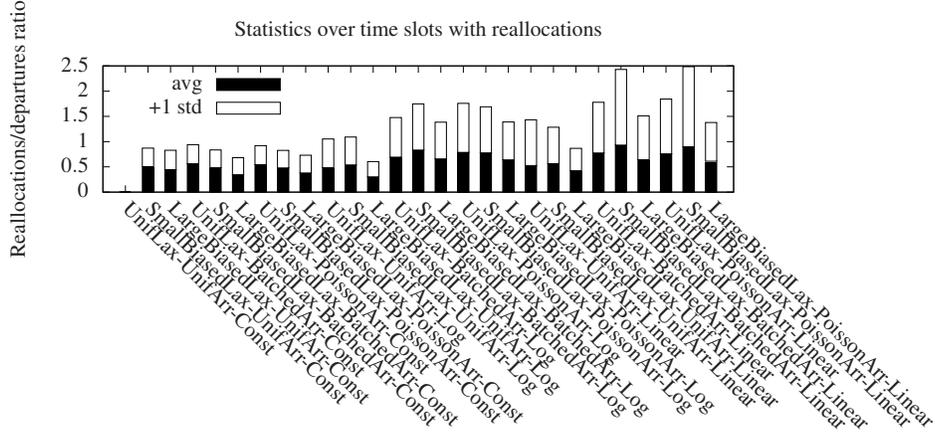}}}
\vspace{-30pt}
\caption{Reallocation/Departure ratio statistics for different classification factors, laxity distributions, and arrival distributions,  for $n=4000$ and $w_{\max}=1024$.}
\label{fig:betabars}
\end{figure}

\begin{figure}[h]
\begin{center}
\resizebox{.8\textwidth}{!}{\input{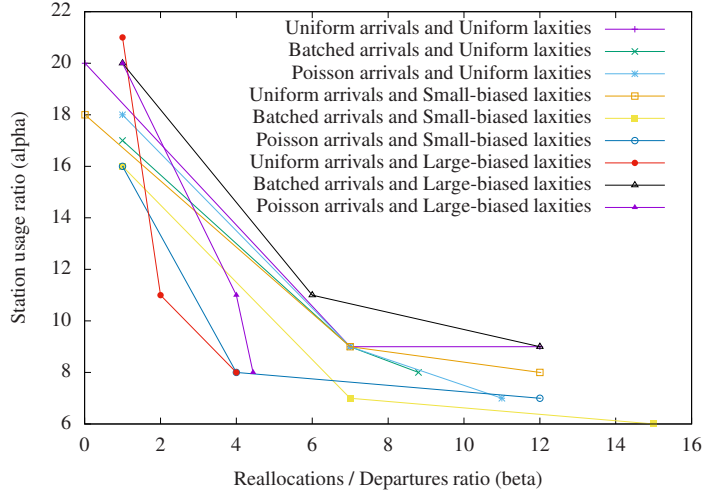}}
\caption{Worst case $\alpha$ vs. $\beta$. $n=4000$, $w_{\max}=1024$, $w_{\min}=1$, $\rho=1$. Each colored line corresponds to the 3 data points obtained for each input. The leftmost point corresponds to the Constant factor algorithm, the middle point corresponds to the Logarithmic factor, and the rightmost point corresponds to the Linear factor.}
\label{fig:alphavsbeta}
\end{center}
\end{figure}

\begin{figure}[h]
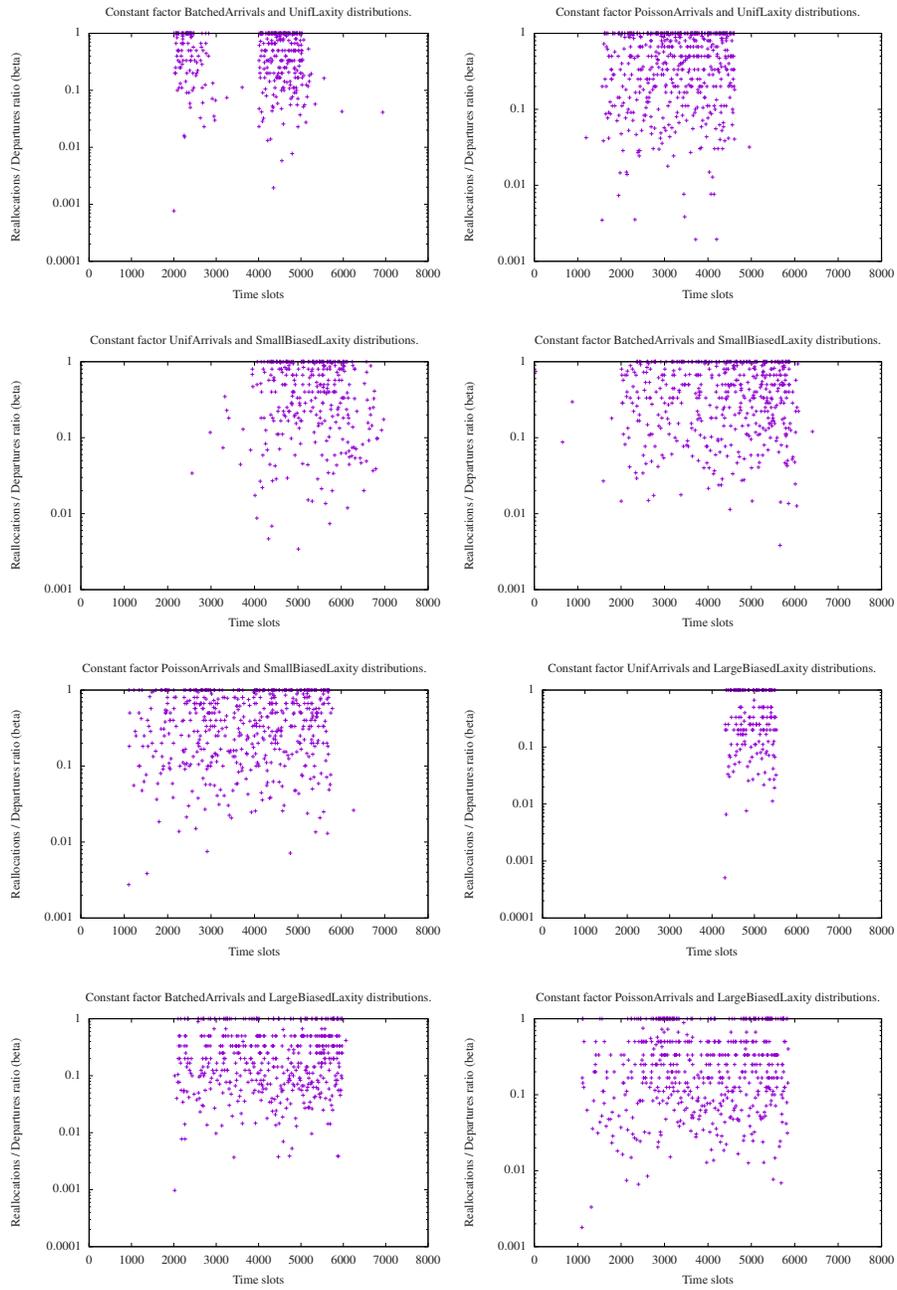

\begin{center}
\subfigure{\resizebox{.49\textwidth}{!}{\input{epslatex_betaBatchedArrivalsUnifLaxityconstant.tex}}}
\subfigure{\resizebox{.49\textwidth}{!}{\input{epslatex_betaPoissonArrivalsUnifLaxityconstant.tex}}}
\subfigure{\resizebox{.49\textwidth}{!}{\input{epslatex_betaUnifArrivalsSmallBiasedLaxityconstant.tex}}}
\subfigure{\resizebox{.49\textwidth}{!}{\input{epslatex_betaBatchedArrivalsSmallBiasedLaxityconstant.tex}}}
\subfigure{\resizebox{.49\textwidth}{!}{\input{epslatex_betaPoissonArrivalsSmallBiasedLaxityconstant.tex}}}
\subfigure{\resizebox{.49\textwidth}{!}{\input{epslatex_betaUnifArrivalsLargeBiasedLaxityconstant.tex}}}
\subfigure{\resizebox{.49\textwidth}{!}{\input{epslatex_betaBatchedArrivalsLargeBiasedLaxityconstant.tex}}}
\subfigure{\resizebox{.49\textwidth}{!}{\input{epslatex_betaPoissonArrivalsLargeBiasedLaxityconstant.tex}}}
\caption{Reallocation/Departure ratio ($\beta$) vs. time for constant classification factor, $n=4000$ and $w_{\max}=1024$.}
\label{fig:beta_constant}
\end{center}
\end{figure}

\begin{figure}[h]
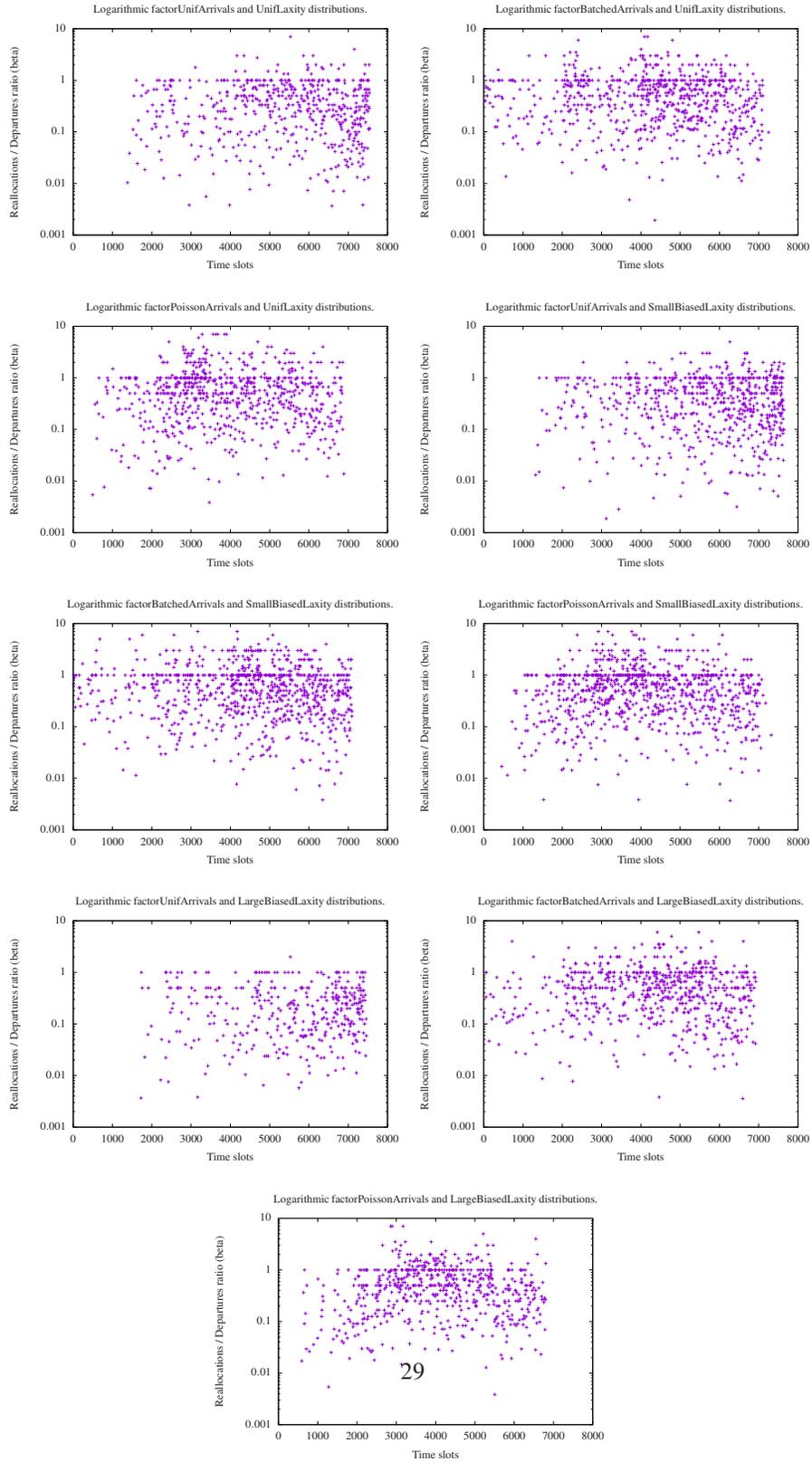

\begin{center}
\subfigure{\resizebox{.49\textwidth}{!}{\input{epslatex_betaUnifArrivalsUnifLaxitylog.tex}}}
\subfigure{\resizebox{.49\textwidth}{!}{\input{epslatex_betaBatchedArrivalsUnifLaxitylog.tex}}}
\subfigure{\resizebox{.49\textwidth}{!}{\input{epslatex_betaPoissonArrivalsUnifLaxitylog.tex}}}
\subfigure{\resizebox{.49\textwidth}{!}{\input{epslatex_betaUnifArrivalsSmallBiasedLaxitylog.tex}}}
\subfigure{\resizebox{.49\textwidth}{!}{\input{epslatex_betaBatchedArrivalsSmallBiasedLaxitylog.tex}}}
\subfigure{\resizebox{.49\textwidth}{!}{\input{epslatex_betaPoissonArrivalsSmallBiasedLaxitylog.tex}}}
\subfigure{\resizebox{.49\textwidth}{!}{\input{epslatex_betaUnifArrivalsLargeBiasedLaxitylog.tex}}}
\subfigure{\resizebox{.49\textwidth}{!}{\input{epslatex_betaBatchedArrivalsLargeBiasedLaxitylog.tex}}}
\subfigure{\resizebox{.49\textwidth}{!}{\input{epslatex_betaPoissonArrivalsLargeBiasedLaxitylog.tex}}}
\caption{Reallocation/Departure ratio ($\beta$) vs. time for logarithmic classification factor, $n=4000$ and $w_{\max}=1024$.}
\label{fig:beta_logarithmic}
\end{center}
\end{figure}

\begin{figure}[h]
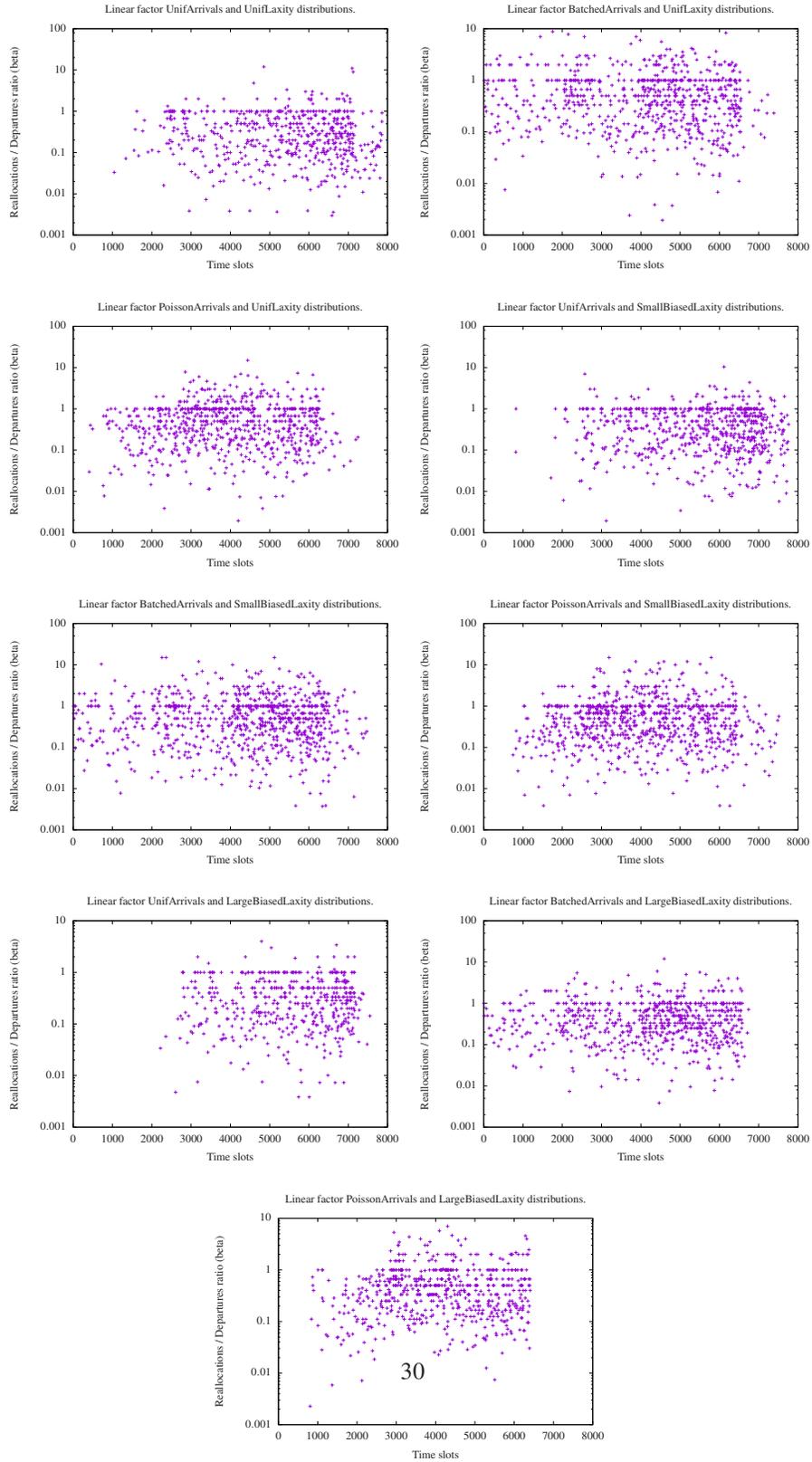

\begin{center}
\subfigure{\resizebox{.49\textwidth}{!}{\input{epslatex_betaUnifArrivalsUnifLaxitylinear.tex}}}
\subfigure{\resizebox{.49\textwidth}{!}{\input{epslatex_betaBatchedArrivalsUnifLaxitylinear.tex}}}
\subfigure{\resizebox{.49\textwidth}{!}{\input{epslatex_betaPoissonArrivalsUnifLaxitylinear.tex}}}
\subfigure{\resizebox{.49\textwidth}{!}{\input{epslatex_betaUnifArrivalsSmallBiasedLaxitylinear.tex}}}
\subfigure{\resizebox{.49\textwidth}{!}{\input{epslatex_betaBatchedArrivalsSmallBiasedLaxitylinear.tex}}}
\subfigure{\resizebox{.49\textwidth}{!}{\input{epslatex_betaPoissonArrivalsSmallBiasedLaxitylinear.tex}}}
\subfigure{\resizebox{.49\textwidth}{!}{\input{epslatex_betaUnifArrivalsLargeBiasedLaxitylinear.tex}}}
\subfigure{\resizebox{.49\textwidth}{!}{\input{epslatex_betaBatchedArrivalsLargeBiasedLaxitylinear.tex}}}
\subfigure{\resizebox{.49\textwidth}{!}{\input{epslatex_betaPoissonArrivalsLargeBiasedLaxitylinear.tex}}}
\caption{Reallocation/Departure ratio ($\beta$) vs. time for linear classification factor, $n=4000$ and $w_{\max}=1024$.}
\label{fig:beta_linear}
\end{center}
\end{figure}

The arrival time of each client was chosen: (a) uniformly at random within the interval $[1,2n]$; (b) in $3$ batches of $n/3$ clients arriving at $t=1$, $t=n/2$, and $t=n$; 
and (c) as a Poisson process with mean rate $\lambda = 0.7$. The choice of a Poisson process intends to model another case where the arrival schedule does not include bursts, whereas the value chosen for $\lambda$ intends to model an arrival schedule that is somewhat dense ($0.7$ expected arrivals per unit of time until all $n$ clients have arrived).
For each client, the departure time was chosen uniformly at random from the interval $[t_a,2n]$, where $t_a$ is the time of arrival of such client.
The inputs for $n=4000$ and $w_{\max}=1024$ are illustrated in Figure~\ref{fig:H} 
showing the $H(C(t))$ function, which is a lower bound on the optimal number of stations needed.

With respect to the protocol, three different classification factors: constant, logarithmic, and linear, were used, as detailed in Algorithm~\ref{alg:classifier}. 
We implemented the protocol and input generator in Java 8. The simulations were carried out on one of the Linux servers at Pace University. The specifications are 
Intel\circledR Xeon\circledR CPU X5450  @ 3.00GHz, 2GB RAM, 150GB HD, running Debian 8 x64.

For each of the $243$ scenarios that arise from the combination of the above variants (3 $w_{\max}$, 3 laxity distributions, 3 arrival distributions, 3 numbers of clients, and 3 protocols), 
we evaluated experimentally the $(\alpha,\beta)$-\name of CPR. 
Our simulations showed that the performance in practical settings is indeed as expected or better than the theoretical bounds (as in Corollary~\ref{cor:gral}). 
The discussion and plots that follow, refer to $n=4000$ and $w_{\max}=1024$, but similar results were obtained for the other cases. 
The source code, the input data, and the raw output data are publicly available in~\cite{simulator}.

It can be seen in Figures~\ref{fig:beta_constant},~\ref{fig:beta_logarithmic}, and~\ref{fig:beta_linear} that the reallocation vs. departures weight ratio (bounded by $\beta$) is frequently at most $1$.
For constant factor classification on uniform arrival distribution and uniform laxity no client was ever reallocated. Hence, this case is not plotted. Also, the ratio is defined on reallocation events. Hence, no data points are shown in time slots without reallocations.


To quantify the latter observations, we compute statistics of the reallocation vs. departures weight ratio, over time slots where some client has been reallocated. 
The results are shown in Figure~\ref{fig:betabars}. It can be seen that in all cases the average plus one standard deviation is below $2.5$. 
For comparison, we compute the bounds $\beta$ proved in Corollary~\ref{cor:gral}. 
Recall that the sample space for $w_{\max}$ in the simulations was $[1,1024]$. 
Nevertheless, being pessimistic and replacing $w_{\max}=8$, and the value $\rho=1$ used in our simulations, we have that the theoretical upper bound is $\beta\geq 3$ for all classification factors. 
For larger values of $w_{\max}$ the gap between our observations and the theoretical bound is even larger, showing that on realistic inputs our protocol behaves much better than \pru{the} worst-case theoretical bounds.

With respect to station usage, Figure~\ref{fig:alpha} 
shows that after a period upon initial arrivals and a period before last departures, the station usage ratio against $H(C(t))$, which is only a lower bound of the optimal, (bounded by $\alpha$) is most of the time below $4$, and frequently below $2$.
We make this observation more precise \pru{by} computing the percentage of time slots when the station usage ratio against $H(C(t))$ is below $4$ for each combination of classification factor and arrival distribution. The results are shown in Table~\ref{table:percent}.

\begin{figure*}[h]
\begin{center}
\subfigure{\resizebox{.49\textwidth}{!}{\input{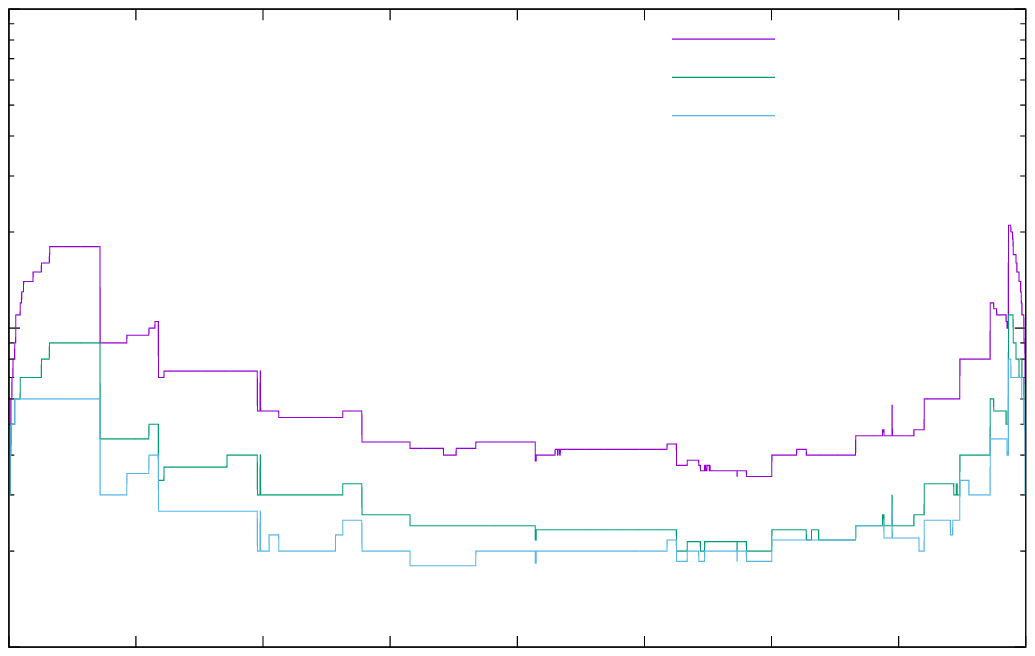}}}
\subfigure{\resizebox{.49\textwidth}{!}{\input{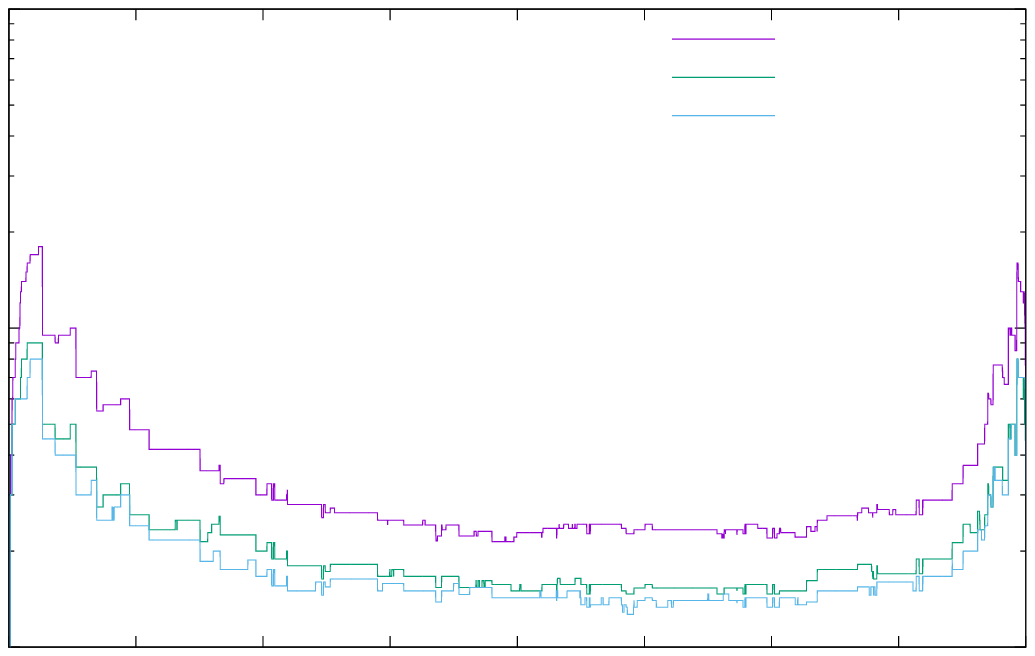}}}
\subfigure{\resizebox{.49\textwidth}{!}{\input{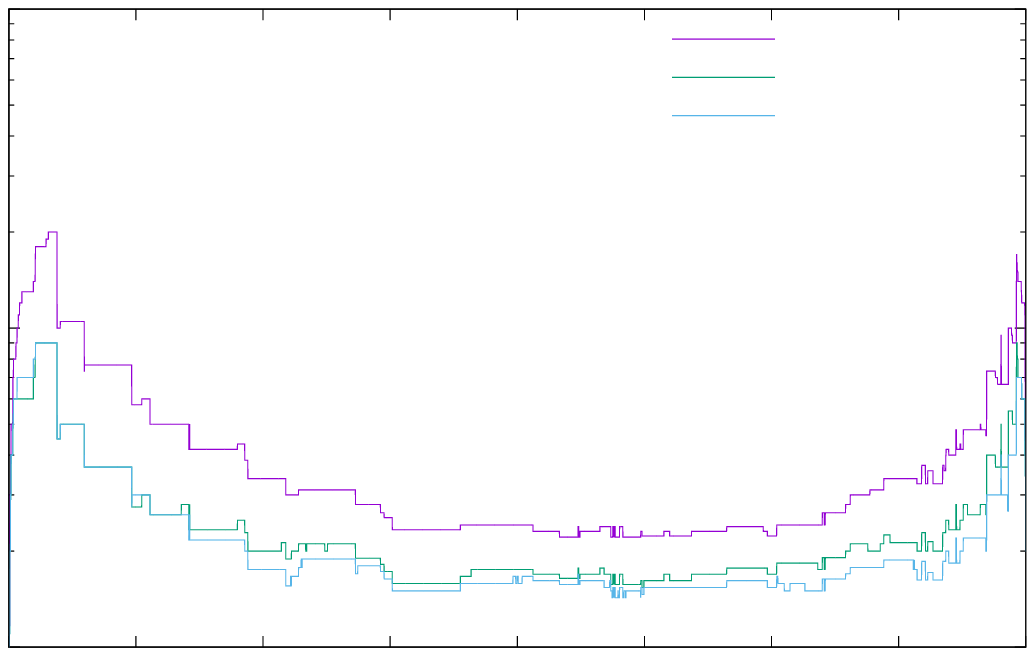}}}
\subfigure{\resizebox{.49\textwidth}{!}{\input{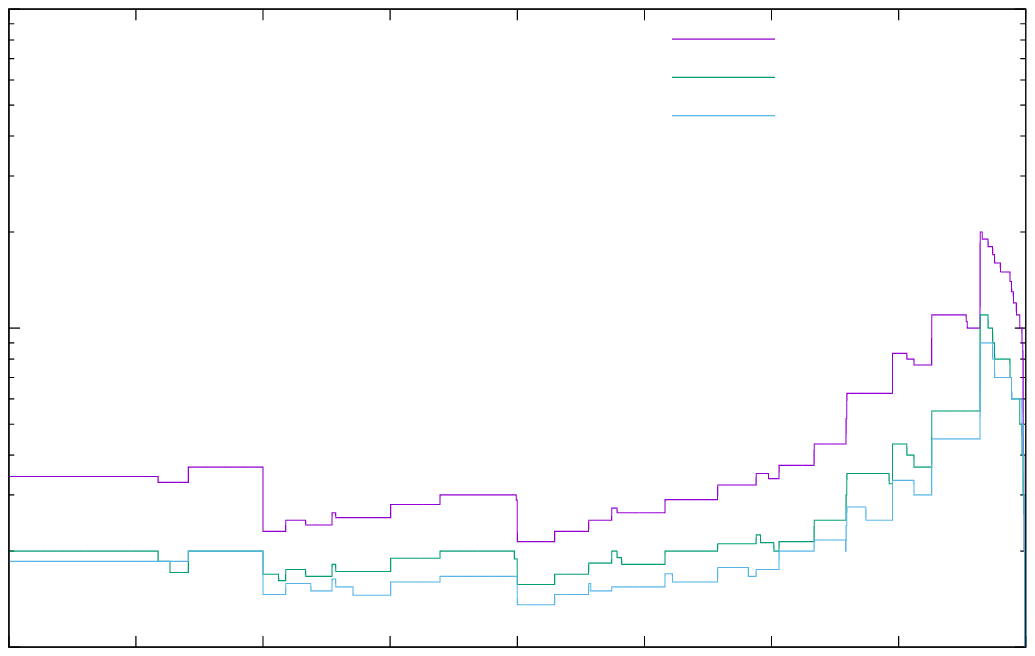}}}
\subfigure{\resizebox{.49\textwidth}{!}{\input{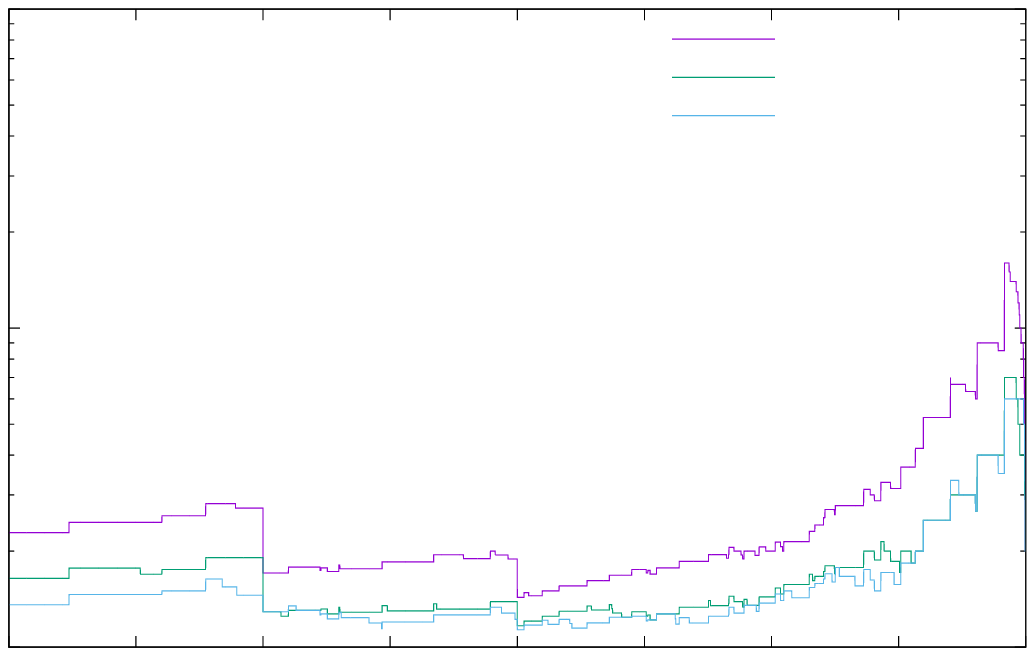}}}
\subfigure{\resizebox{.49\textwidth}{!}{\input{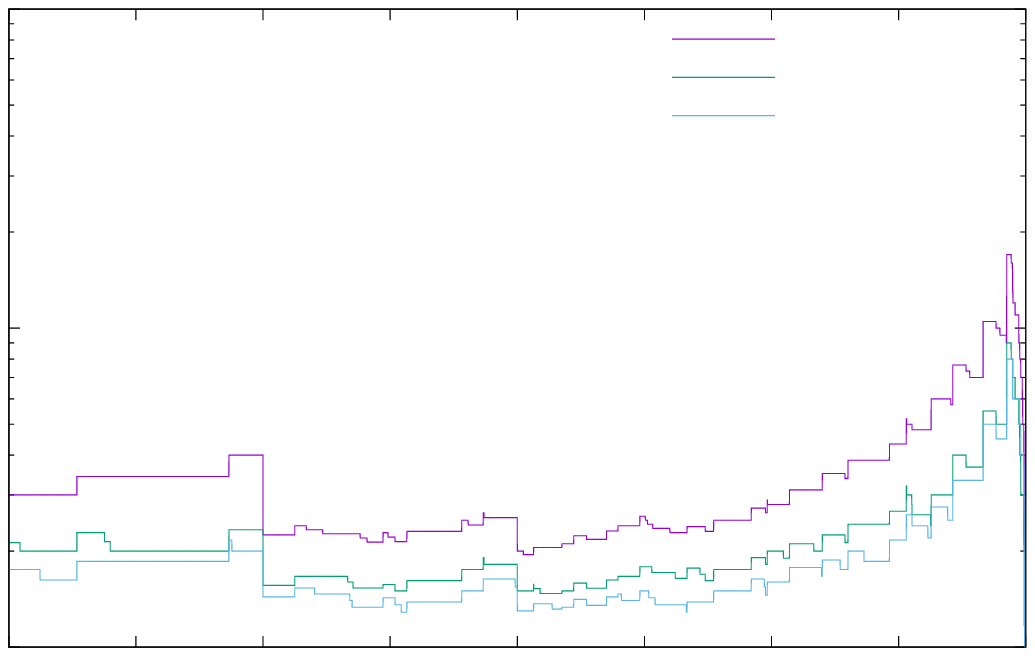}}}
\subfigure{\resizebox{.49\textwidth}{!}{\input{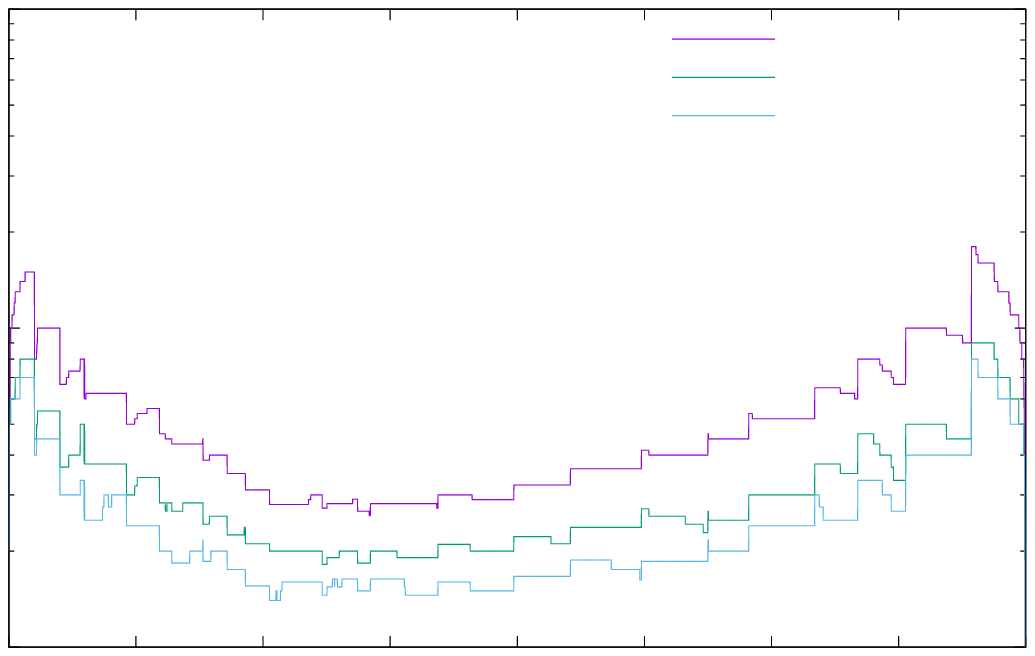}}}
\subfigure{\resizebox{.49\textwidth}{!}{\input{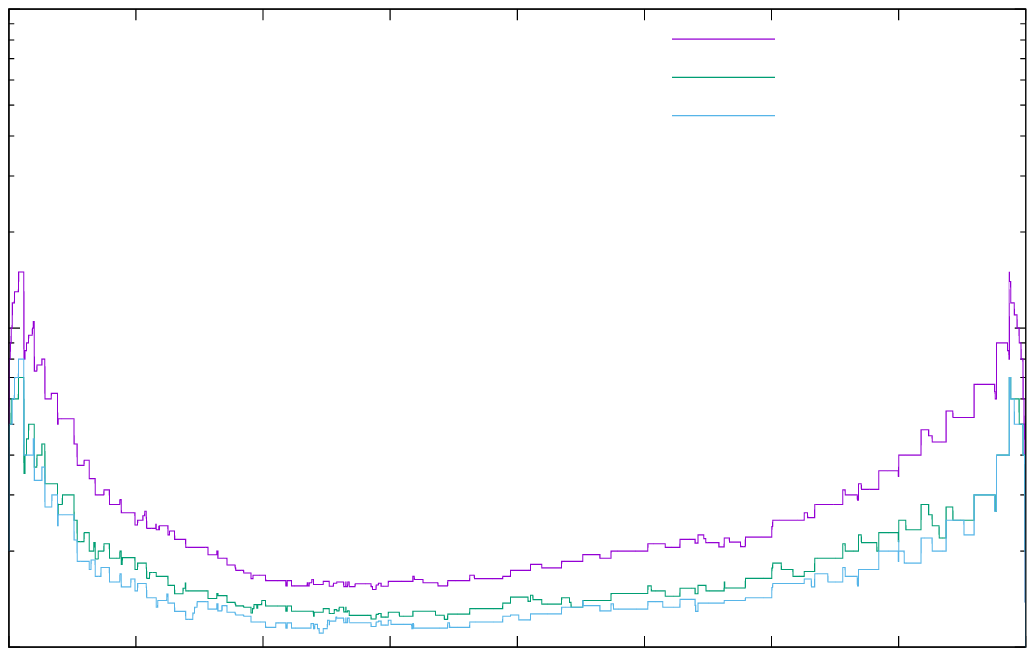}}}
\subfigure{\resizebox{.49\textwidth}{!}{\input{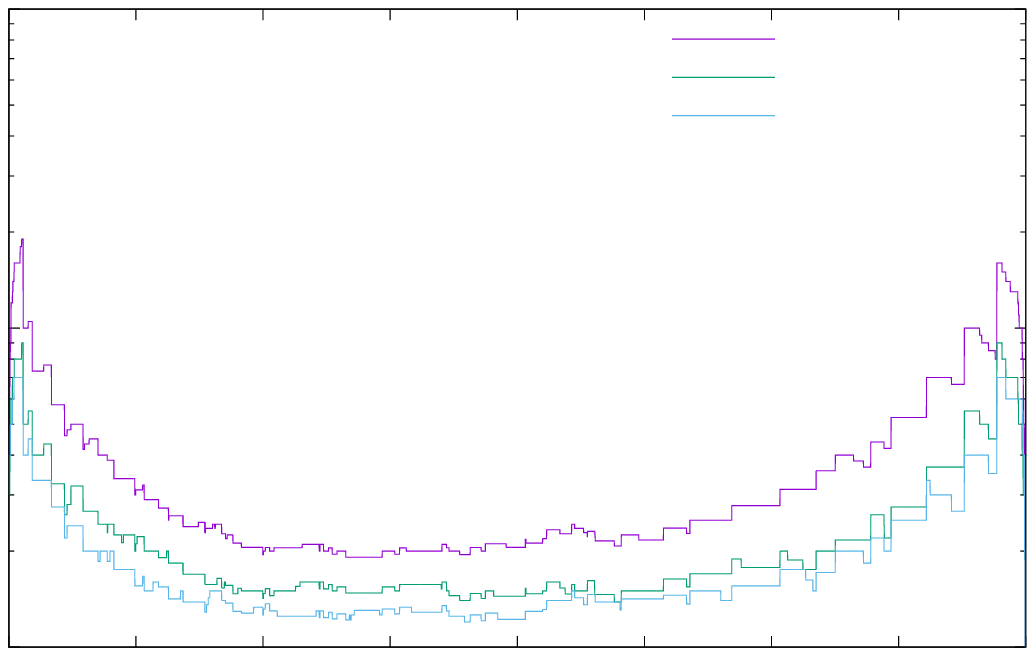}}}
\caption{Station usage ratio ($\alpha$) vs. time,  for $n=4000$ and $w_{\max}=1024$.}
\label{fig:alpha}
\end{center}
\end{figure*}


\begin{table}%
\centering
\begin{tabular}{|c|c|c|c|}
\hline
Laxity distribution & Arrival distribution & Factor & percentage \\
\hline
Unif & Unif & Const & $69.0875$\\
SmallBiased & Unif & Const & $76.5625$\\
LargeBiased & Unif & Const & $9.575$\\
Unif & Batched & Const & $83.4$\\
SmallBiased & Batched & Const & $89.225$\\
LargeBiased & Batched & Const & $79.3125$\\
Unif & Poisson & Const & $73.3875$\\
SmallBiased & Poisson & Const & $80.9$\\
LargeBiased & Poisson & Const & $41.475$\\
Unif & Unif & Log & $90.1875$\\
SmallBiased & Unif & Log & $91.9375$\\
LargeBiased & Unif & Log & $75.925$\\
Unif & Batched & Log & $95.0$\\
SmallBiased & Batched & Log & $95.3375$\\
LargeBiased & Batched & Log & $88.825$\\
Unif & Poisson & Log & $90.0625$\\
SmallBiased & Poisson & Log & $94.05$\\
LargeBiased & Poisson & Log & $78.4375$\\
Unif & Unif & Linear & $91.05$\\
SmallBiased & Unif & Linear & $91.9375$\\
LargeBiased & Unif & Linear & $86.725$\\
Unif & Batched & Linear & $96.0$\\
SmallBiased & Batched & Linear & $95.925$\\
LargeBiased & Batched & Linear & $90.875$\\
Unif & Poisson & Linear & $92.5875$\\
SmallBiased & Poisson & Linear & $94.7375$\\
LargeBiased & Poisson & Linear & $83.275$\\
\hline
\end{tabular}
\caption{Percentage of time slots when the station usage ratio is below $4$, for each classification factor, laxity distribution, and arrival distribution,  for $n=4000$ and $w_{\max}=1024$.}
\label{table:percent}
\end{table}%

Should the reallocation ratio be minimized, the constant factor classification achieves better performance at a higher station usage. On the other hand, if station usage must be kept low, the linear factor classification performs better incurring in higher reallocation cost. The logarithmic factor balances both costs. Figure~\ref{fig:alphavsbeta} illustrates these trade offs. In comparison with the bounds proved in Corollary~\ref{cor:gral}, for the scenarios simulated CPR behaves better than expected.
As we see in the figure, these trade-offs appear in all input distributions, although in some the impact is milder (e.g. large-biased laxities with uniform or Poisson arrivals).

The inputs chosen for our evaluation are intuitively representative of a variety of likely cases. Namely, bursts and smooth arrivals, more/even/less demanding clients, etc. Should a comparison among factors regardless of distributions be needed (e.g., if the distribution is unknown, but the extremal values of bandwidths, laxities, and $H(C(t))$ are known) the worst-case guarantees in the analysis must be used.

\section{Conclusions and Future Work}
\label{sec:conclude}

\pru{
In this paper, we study a dynamic allocation problem SA and associated reallocation algorithms
assuming that clients have laxity and bandwidth requirements.
We characterize these algorithms by defining the notion $(\alpha,\beta)$-performance as combination
of the competitive ratio on station usage ($\alpha$) and the cost of reallocations ($\beta$).
We show that previous protocols that work well for unit cost per client reallocation
do not work well when the cost is more general.
We then present a new protocol called Classified Preemptive Reallocation and 
prove bounds on both of our performance metrics.
We also present experimental simulation results on average cases supplementing our theoretical analysis on worst case.
}

\pru{
There are a few future directions.
To further understand the performance of algorithms,
it is desirable to derive lower bounds on the performance ratio of a general algorithms.
In this paper we assume that each station has the same capacity.
An obvious generalization is to consider stations having different capacities.
In addition, we may extend cost model to introduce a weight to each client and the reallocation cost is then calculated as a weighted cost.
In terms of the setting, we aim to quantify the resources required to complete all requests from clients.
A direction is to consider limited resources and striking a balance between completing more clients and not violating the resource limitation.
}

\section*{Acknowledgements}
\pru{
The authors thank the support from a Visiting Fellowship and the initiative Networks Sciences \& Technologies (NeST)
by School of EEE \& CS, University of Liverpool,} \mig{as well as Pace University NYFC SRC Award and Kenan Fund Award.}


\bibliographystyle{plain}
\bibliography{references2}


\end{document}